\newtheorem{lemma}{Lemma}
\newtheorem{definition}{Definition}
\newtheorem{proposition}{Proposition}
\newcommand{\QQ}[0]{\widehat{Q}}
\definecolor{BSorange}{RGB}{140,50,0}
\begin{document}

\title{Universal lower bound on topological entanglement entropy}
	\author{Isaac H. Kim}
	\affiliation{Department of Computer Science, University of California, Davis, CA 95616, USA}
	\author{Michael Levin}
	\affiliation{Kadanoff Center for Theoretical Physics, University of Chicago, Chicago, Illinois 60637, USA}
	
	\author{Ting-Chun Lin}
	\affiliation{Department of Physics, University of California at San Diego, La Jolla, CA 92093, USA}
    \affiliation{Hon Hai Research Institute, Taipei, Taiwan}
	
	\author{Daniel Ranard}
	\affiliation{Center for Theoretical Physics, Massachusetts Institute of Technology, Cambridge, MA 02139}
	\author{Bowen Shi}
	\affiliation{Department of Physics, University of California at San Diego, La Jolla, CA 92093, USA}
	\date{\today}

	\begin{abstract}
	Entanglement entropies of two-dimensional gapped ground states are expected to satisfy an area law, with a constant correction term known as the topological entanglement entropy (TEE).  In many models, the TEE takes a universal value that characterizes the underlying topological phase.  However, the TEE is not truly universal: it can differ even for two states related by constant-depth circuits, which are necessarily in the same phase.  The difference between the TEE and the value predicted by the anyon theory is often called the \emph{spurious} topological entanglement entropy. We show that this spurious contribution is always nonnegative, thus the value predicted by the anyon theory provides a universal lower bound.
	This observation also leads to a definition of TEE which is invariant under constant-depth quantum circuits. 
	

	\end{abstract}
	\maketitle

Ground states of 2D gapped Hamiltonians are believed to satisfy an area law: the entanglement entropy of a region cannot increase faster than its perimeter.  In many examples, the entropy of the reduced density matrix on a disk $A$ takes the form 
\begin{equation}
    S(\sigma_A) = \alpha  |\partial A|  - \gamma + \cdots \label{eq:area-law}
\end{equation}
where $\alpha |\partial A|$ is the leading ``area law'' term proportional to the boundary length, $\gamma$ is a constant term, and the ellipsis represents terms that vanish for large regions.

 The constant term $\gamma$, under natural assumptions, was argued to be universal, i.e.\ the same for all gapped ground states in a given phase~\cite{Kitaev2006,Levin2006}. In particular, $\gamma$ takes a form determined solely by the underlying anyon theory of the phase, $\gamma=\log \mathcal{D}$,
 where $\mathcal{D}= \sqrt{\sum_a d_a^2}$ is the total quantum dimension of the anyons and $d_a$ is the quantum dimension of the anyon $a$. Given its connection to anyons, the constant $\gamma$ has been termed the ``topological entanglement entropy'' (TEE).
The TEE can be computed in both non-solvable~\cite{Isakov2011,Jiang2012} and solvable models~\cite{hamma2005bipartite,Kitaev1997,Levin2005,Hahn2020,lin2021generalized}, and it is often used as a smoking gun signature of topological order, or to distinguish two phases.

A common way to extract the TEE is to use a judicious linear combination of entropies of adjacent regions. We focus on the definition in Ref.~\cite{Levin2006}, where the TEE $\gamma$ is defined using the conditional mutual information $I(A:C|B)_{\sigma}:= S(\sigma_{AB}) + S(\sigma_{BC}) - S(\sigma_B) - S(\sigma_{ABC})$ of regions $A,B,$ and $C$ forming an annulus as in Fig.~\ref{fig:special_circuit}(a):
 \begin{align} \label{eq:gamma_CMI}
     I(A:C|B)_{\sigma} \equiv 2\gamma,
 \end{align}
 where $\sigma$ is a ground state.

 While the TEE is a useful diagnostic of topological order, it was soon observed \cite{BravyiUnpublished} that it is not a genuine invariant of the topological phase, unlike e.g.\ \cite{haah2016invariant,kato2020entropic}.  Two ground states are in the same phase if they are connected by a constant-depth circuit consisting of local gates. But $\gamma$ as defined by Eq.~\eqref{eq:gamma_CMI} can change
 under such a circuit. In fact, a shallow circuit acting on a product state may achieve a  nonzero value of $I(A:C|B)$ for arbitrarily large regions \cite{cano2015interactions, Zou2016,Williamson2019}.  Deviations of $\gamma$ from the purportedly universal value $\log \mathcal{D}$ have been called ``spurious'' contributions, or the spurious TEE. States with spurious TEE exist in both trivial and non-trivial topological phases.  These examples often arise from symmetry-protected topological phases (SPTs) \cite{Zou2016, fliss2017interface,santos2018symmetry, Williamson2019,stephen2019detecting} but perhaps not always \cite{Kato2020}.  

 Our main result partially restores the universality of the TEE by showing the spurious contribution is always nonnegative.  Thus $\log \mathcal{D}$ provides a universal \textit{lower bound} for the TEE $\gamma$:
 \begin{align} \label{eq:gamma_nonneg}
      \gamma \geq \log \mathcal{D} .
 \end{align}
 This observation leads directly to a definition of TEE on infinite systems which is invariant under constant-depth circuits, by minimizing the ordinary TEE over such circuits. More specifically, for a state $\rho$ defined on the infinite 2D plane, the following quantity 
 \begin{align} \label{eq:gamma_min}
     \gamma_{\textrm{min}} &=  \lim_{R \to \infty} \min_{U} \frac{1}{2} I(A_R:C_R|B_R)_{U\rho U^\dagger} \ , 
 \end{align}
 yields $\log \mathcal{D}$ (for a class of states elaborated below), where the regions $A_R,B_R,C_R$ have a radius and thickness of order $R$, and the minimum is taken over circuits $U$ of depth $d<c R$ for some fixed constant $c \in (0,1)$. 
 
  Our result in Eq.~\eqref{eq:gamma_nonneg} helps restore the TEE as a rigorous diagnostic to distinguish topological phases, albeit with limitations.  For instance, if a state has $I(A:C|B)= \log 2$ for some large regions, it may still be in the trivial phase (where $2 \log \mathcal{D} =0$), and indeed such examples exist. But it \textit{cannot} be in the same phase as the toric code, which has $2 \log \mathcal{D} = 2 \log 2$; the latter would require a negative spurious TEE, which we rule out.  
 
\textbf{\textit{Setup ---}}
We now explain our main result more precisely.
We consider a special class of bosonic quantum many-body states, defined on infinite two-dimensional lattices, 
which we refer to as ``reference states,''closely related to the states considered in Ref.~\cite{Shi2020}.
\begin{definition}
 \label{assumption:TEE-inv}
  A state $\sigma$ is a reference state if (i) the TEE calculated as $\gamma_0 = \frac{1}{2}I(A:C|B)_{\sigma}$ is the same for any choice of regions topologically equivalent to Fig.~\ref{fig:special_circuit}(a) and (ii) the mutual information between two subsystems is zero for any two non-adjacent subsystems. 
\end{definition}
\noindent

Our main technical result is the following inequality, which holds for any reference state $\sigma$, and for any circuit $U$ whose depth is small compared to the radius and thickness of the annulus $ABC$:
 \begin{align} \label{eq:rho_sigma_CMI} 
      I(A:C|B)_{U\sigma U^{\dagger}} \geq I(A:C|B)_\sigma \equiv 2 \gamma_0.
 \end{align}
In other words, we show that constant-depth circuits can never decrease the TEE, \textit{when acting on a reference state.} 

To understand the implications of this result, note that the set of reference states includes all ground states of string-net~\cite{Levin2005,Hahn2020,lin2021generalized} and quantum double models~\cite{Kitaev1997}, and more generally any state satisfying the entanglement bootstrap axioms~\cite{Shi2020}. 
For all of these examples, the RHS of Eq.~\eqref{eq:rho_sigma_CMI} is known to equal $2\log \mathcal{D}$~\cite{Levin2006,Kitaev2006,Shi2020}. Therefore, \eqref{eq:rho_sigma_CMI} implies the claimed lower bound \eqref{eq:gamma_nonneg} for any state obtained by a constant-depth circuit acting on a string-net, quantum double, or entanglement bootstrap state. 
(We discuss a generalization of \eqref{eq:gamma_nonneg} to general 2D gapped ground states in Appendix \ref{sec:stacking}.)
Similarly, we deduce \eqref{eq:gamma_min} with $\gamma_{\text{min}} = \log \mathcal{D}$ for any state $\rho$ given by a finite-depth circuit $V$ applied to a reference state, where the minimum is achieved by the circuit $U = V^{-1}$.

While we work in the plane for concreteness, our proof also applies to the TEE defined on any disk-like region embedded in an arbitrary manifold.

\textbf{\textit{Example: Toric code ---}}
To explain the key idea behind our proof, it is instructive to first focus on a concrete reference state $\sigma$, namely the toric code ground state~\cite{Kitaev1997} on a plane. 
(Our argument here will rely on special properties of the toric code state, but later we will generalize the proof to all states satisfying Definition \ref{assumption:TEE-inv}.)
For this state, $I(A: C|B)_\sigma= 2 \log 2$ so that $\gamma_0 = \log 2$~\cite{hamma2005bipartite}.
If we now apply a constant-depth quantum circuit $U$, defining $\widetilde{\sigma} =U \sigma U^\dagger$, in general $I(A : C| B)_{\widetilde{\sigma}} \ne 2 \log 2$.
Nevertheless, we will show that for a sufficiently large annulus $ABC$, we still have the lower bound
\begin{equation}
I(A : C| B)_{\widetilde{\sigma}} \geq 2 \log 2.
\label{eq:tcbound}
\end{equation}

We first prove the bound (\ref{eq:tcbound}) for a special class of constant-depth circuits $U$, namely those that are supported within a constant distance of $BC$ [Fig.~\ref{fig:special_circuit}(b)]. Later we will extend this result to general constant-depth circuits.

\begin{figure}[t]
      \includegraphics[]{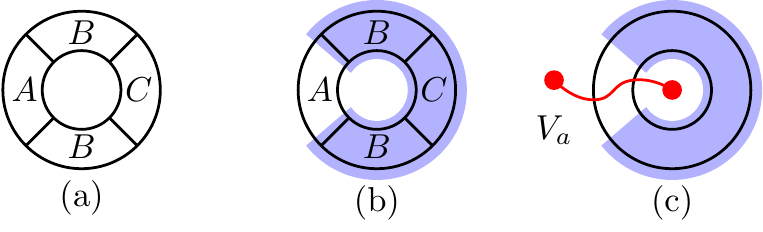}
 	\caption{(a) The partition used to calculate the TEE. (b) Support of the unitary $U$ considered. (c) String operator $V_a$ creates an anyon $a$ in the interior of the annulus and its antiparticle in the exterior.}  \label{fig:special_circuit}
 \end{figure}

Our basic strategy is to construct a state $\widetilde{\lambda}$ that is ``locally indistinguishable'' from $\widetilde{\sigma}$. More precisely, we will construct a state $\widetilde{\lambda}$ that is indistinguishable from $\widetilde{\sigma}$ over $AB$ and $BC$: that is, $\widetilde{\lambda}_{AB} = \widetilde{\sigma}_{AB}$ and $\widetilde{\lambda}_{BC} = \widetilde{\sigma}_{BC}$. We can then express $I(A: C|B)_{\widetilde{\sigma}}$ in terms of $I(A: C|B)_{\widetilde{\lambda}}$ using the identity
\begin{equation}
\label{eq:ISid}
    I(A: C|B)_{\widetilde{\sigma}} = I(A: C|B)_{\widetilde{\lambda}} +  S(\widetilde{\lambda}_{ABC}) - S(\widetilde{\sigma}_{ABC}). 
\end{equation}
By the strong subadditivity of the entropy (SSA)~\cite{Lieb1973}, $I(A: C|B)_{\widetilde{\lambda}} \geq 0$, so
\begin{equation}
I(A: C|B)_{\widetilde{\sigma}} \geq S(\widetilde{\lambda}_{ABC}) - S(\widetilde{\sigma}_{ABC}).
    \label{eq:gamma_lowerbound_ent_difference}
\end{equation}
We will obtain the desired lower bound (\ref{eq:tcbound}) from a judicious choice of $\widetilde{\lambda}$.

The easiest way to construct an appropriate $\widetilde{\lambda}$ is to first find a state $\lambda$ that is locally indistinguishable from the toric code ground state $\sigma$. More precisely, we need a $\lambda$ that is indistinguishable from $\sigma$ over the past light cone of $AB$ and $BC$ (with respect to $U$). Once we find such a $\lambda$, we can then set $\widetilde{\lambda} = U \lambda U^\dagger$. 

We construct such a $\lambda$ using a probabilistic mixture of toric code \emph{excited} states. (Later, we use a more general approach.)  For each anyon type $a\in \mathcal{C} = \{1, e, m, \epsilon \}$, we define a corresponding excited state $\rho^{(a)}$ by $\rho^{(a)}=V_a\sigma V_a^{\dagger}$, where $V_a$ is a unitary (open) string operator that places an anyon excitation $a$ in the interior of the annulus and its antiparticle in the exterior [Fig.~\ref{fig:special_circuit}(c)]. We then define $\lambda = \sum_a p_a \rho^{(a)}$ for some probability distribution $\{p_a: a\in \mathcal{C} \}$. Note that $\lambda$ has the requisite indistinguishability property as long as the endpoints of the string operators $V_a$ (where the anyons are created) are far enough away from the annulus to lie outside the past light cones of $AB$ and $BC$.

To proceed, we must evaluate the entropy difference $S(\widetilde{\lambda}_{ABC}) - S(\widetilde{\sigma}_{ABC})$. 
Here it is convenient to choose the path of the string operators $V_a$ so that they avoid the region of support of the constant-depth circuit $U$ (which by assumption is supported near $BC$). Then $V_a$ commutes with $U$ so $\widetilde{\lambda}$ can be written as a probabilistic mixture of the form
\begin{equation}
\widetilde{\lambda} = \sum_a p_a \widetilde{\rho}^{(a)}, \quad \quad \widetilde{\rho}^{(a)} = V_a \widetilde{\sigma} V_a^{\dagger}.
\end{equation}

Crucially, the $\widetilde{\rho}^{(a)}$ states have two simplifying properties: (i) different $\widetilde{\rho}^{(a)}_{ABC}$ are orthogonal, and (ii) $S(\widetilde{\rho}^{(a)}_{ABC}) = S(\widetilde{\sigma}_{ABC})$. Intuitively, property (i) follows from the fact that each $\widetilde{\rho}^{(a)}_{ABC}$ belongs to a different anyon sector on the annulus. More formally, (i) follows from the existence of a collection of (closed) string operators supported within $ABC$ that take on different eigenvalues in each state $\widetilde{\rho}^{(a)}_{ABC}$. (These string operators are simply the closed versions of $U V_a U^\dagger$; they can be drawn within $ABC$ whenever $ABC$ is wider than twice the circuit depth of $U$). Meanwhile, property (ii) follows from the fact that the $V_a$ are products of single-site unitaries; in particular, each $V_a$ can be written as a product of a unitary acting entirely within $ABC$ and a unitary acting entirely outside $ABC$, neither of which changes the entanglement entropy of $ABC$. 

Given properties (i) and (ii) of $\widetilde{\rho}^{(a)}$, the entropy difference can be computed as 
\begin{equation}
S(\widetilde{\lambda}_{ABC}) - S(\widetilde{\sigma}_{ABC}) = H(\{p_a \}),
\label{eq:tcent_diff}
\end{equation}
where $H(\{p_a \}) = - \sum_a p_a \log(p_a)$ is the Shannon entropy of the probability distribution $\{p_a\}$. Substituting (\ref{eq:tcent_diff}) into (\ref{eq:gamma_lowerbound_ent_difference}), we obtain
\begin{equation}
I(A: C|B)_{\widetilde{\sigma}} \geq H(\{p_a \}).
\label{eq:ent_lower_bound_toric}
\end{equation}
To get the best bound, we choose the probability distribution that maximizes $H(\{p_a \})$, namely the uniform $p_a = \frac{1}{4}$. Then $H(\{p_a \}) = 2 \log 2$, yielding the desired bound (\ref{eq:tcbound}).

\begin{figure}[t]
    \includegraphics[]{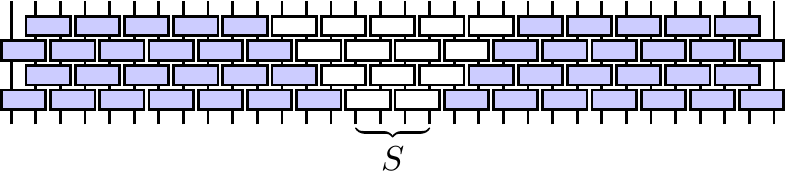}
	\caption{For any constant-depth circuit $U$, for any subsystem $S$, we can obtain a circuit $U'$ of same depth acting trivially on $S$, by removing from $U$ the ``light cone'' (white gates) of $S$.}\label{fig:past-causal-cone}
\end{figure}

To complete the argument, we extend the bound (\ref{eq:ent_lower_bound_toric}) to general constant-depth circuits $U$. First, recall the entanglement entropy of a subsystem is invariant under unitaries acting exclusively within the subsystem or its complement. Thus we can make the replacement
\begin{equation}
I(A: C|B)_{U \sigma U^\dagger} = I(A: C|B)_{U'{\sigma}U'^\dagger},
\label{eq:UtoUp}
\end{equation}
where $U'$ is a constant-depth quantum circuit that acts trivially deep in the interior of $A$ and also trivially far outside $ABC$ [Fig.~\ref{fig:deformation-A}(a)]. Here, we are using the fact that $U$ is a constant-depth quantum circuit, and therefore we can ``cancel out'' its action in a subsystem $S$ by multiplying by an appropriate unitary supported in the light cone of $S$ (Fig.~\ref{fig:past-causal-cone}). 

By SSA, $I(A: C|B)$ cannot increase when $A$ shrinks. Therefore
\begin{equation}
I(A: C|B)_{U'{\sigma}U'^\dagger} \geq I(A': C|B)_{U'{\sigma}U'^\dagger}    
\label{eq:ssa_atoap}
\end{equation}
where $A'\subset A$ is shown in Fig.~\ref{fig:deformation-A}(b). Finally, applying the same reasoning as in (\ref{eq:UtoUp}), we can replace 
\begin{equation}
I(A': C|B)_{U'{\sigma}U'^\dagger} = I(A': C|B)_{U''{\sigma}U''^\dagger}
\label{eq:UptoUp2}
\end{equation}
where $U''$ is a constant-depth circuit acting on the region shown in Fig.~\ref{fig:deformation-A}(c). Combining (\ref{eq:UtoUp}-\ref{eq:UptoUp2}), we deduce that 

\begin{equation}
I(A: C|B)_{U \sigma U^\dagger} \geq I(A': C|B)_{U''{\sigma}U''^\dagger}. 
\label{eq:ssa_atoap2}
\end{equation}
The lower bound (\ref{eq:ssa_atoap2}) is useful because it allows us to leverage our results from the first part of the proof. In particular, $U''$ is precisely the kind of constant-depth quantum circuit that we analyzed above, so $I(A': C|B)_{U''{\sigma}U''^\dagger} \geq 2 \log 2$ for any sufficiently large annulus $A'BC$. Substituting this inequality into (\ref{eq:ssa_atoap2}), we obtain the desired bound (\ref{eq:tcbound}).

\begin{figure}[t]
	\centering
	    \includegraphics[]{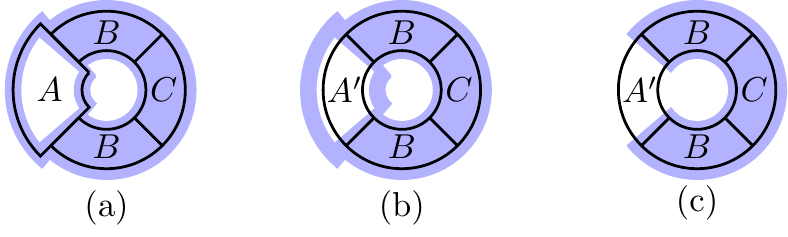}
	\caption{(a) We first remove gates from $U$ on a ``hole" within $A$; we call the new circuit $U'$. (b) We then deform $A$ to $A'\subset A$ such that the boundary of the annulus $A'BC$ is only partially covered. (c) We then further remove some of the gates in the vicinity of $A'$, obtaining $U''$. }
	\label{fig:deformation-A}
\end{figure}

\textbf{\textit{General case ---}}
Our proof for the toric code proceeded in three steps. First, we derived a lower bound (\ref{eq:gamma_lowerbound_ent_difference}) for $I(A:C|B)_{\widetilde{\sigma}}$ in terms of the entropy difference $S(\widetilde{\lambda}_{ABC}) - S(\widetilde{\sigma}_{ABC})$ where $\widetilde{\lambda}$ is any state that is indistinguishable from $\widetilde{\sigma}$ over $AB$ and $BC$. Second, we constructed an appropriate $\widetilde{\lambda}$ and computed the desired entropy difference (\ref{eq:tcent_diff}) in the special case where $U$ is a constant-depth circuit supported within a constant distance of $BC$ [Fig.~\ref{fig:special_circuit}(b)]. Combining these two results, we obtained the desired lower bound, but only for this special class of circuits $U$. In the third and final step, we extended this bound to arbitrary constant-depth circuits $U$ using the inequality (\ref{eq:ssa_atoap2}).

Conveniently, the first and third steps of our proof immediately generalize to any reference state $\sigma$ since they do not use any properties of the toric code. On the other hand, in the second step, we used the specific structure of the toric code string operators \footnote{In particular, we used the fact that the string operators of the toric code are depth-$1$ quantum circuits. In contrast, general anyon models can require higher depths, e.g., non-Abelian anyon strings need linear depth (\cite{ShiThesis}, Chapter 7).}, so we need a different argument for this step in the general case. In particular, instead of defining $\widetilde{\lambda}$ in terms of a mixture of excited states, we will now define it in terms of the \emph{maximum-entropy state} (``max-entropy state''). Consider a larger annulus $Y = ABC{\cup}\text{Supp}(U)$, with $U$ again as in Fig.~\ref{fig:special_circuit}(b). Define a density matrix $\lambda$ to be the maximum-entropy state consistent with the reduced density matrices of $\sigma$ over the past light cones of $AB$ and $BC$. We then define $\widetilde{\lambda} = U \lambda U^\dagger$.

By construction, $\widetilde{\lambda}$ is indistinguishable from $\widetilde{\sigma}$ over $AB$ and $BC$ and therefore the lower bound (\ref{eq:gamma_lowerbound_ent_difference}) still holds. The only remaining question is the value of the entropy difference on the right-hand side. We claim that
\begin{align}
    S(\widetilde{\lambda}_{ABC}) - S(\widetilde{\sigma}_{ABC}) & = S(\lambda_{Y}) - S(\sigma_{Y}) \label{eq:ent_diff_invariant} 
\end{align}
and in turn
\begin{align}
    S(\lambda_{Y}) - S(\sigma_{Y})  & = 2 \gamma_0 \label{eq:ent_diff_gamma0}
\end{align} 
provided that $A, B,$ and $C$ are sufficiently large compared to the circuit depth. See Eq.~\ref{eq:fact_first}~for a self-contained derivation of Eq.~\eqref{eq:ent_diff_gamma0} starting from Definition~\ref{assumption:TEE-inv}.  (The appendices develop some lemmas related to quantum Markov chains, primarily for the purpose of deriving Eq.~\eqref{eq:ent_diff_gamma0}.)
We are finished once we prove Eq.~\eqref{eq:ent_diff_invariant}.

We now present the proof of~\eqref{eq:ent_diff_invariant}. Our main tool is the following lemma about entropy differences under reversible channels, proven in Appendix \ref{appendix:proof_ent_diff}.
\begin{lemma}\label{lemma:ent_diff-v2}
Let $\rho$ and $\rho'$ be density matrices over $PQ$ such that $\rho_Q' = \rho_Q$ and $\rho_P' = \rho_P$. Let $\mathcal{R},\mathcal{T}$ be a pair of quantum channels  $\mathcal{R} : Q \to \QQ$ and  $\mathcal{T} : \QQ \to Q$. If
\begin{align}
	\mathcal{T} \circ \mathcal{R}(\rho_{PQ}) & = \rho_{PQ} 		\label{eq:rtcond} \\
		\mathcal{T} \circ \mathcal{R}(\rho_{PQ}') & = \rho_{PQ}' \nonumber
\end{align} 
then
\begin{align}
		 S(\rho_{PQ}) - S(\rho_{PQ}')  
		= S(\mathcal{R}(\rho_{PQ})) - S(\mathcal{R}(\rho_{PQ}')).
\end{align}
\end{lemma}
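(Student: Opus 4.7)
The plan is to reduce the claimed entropy identity to equalities of mutual information, which then follow from the reversibility hypothesis via data processing. First, I would decompose each of the four relevant joint entropies using
\begin{equation*}
S(\tau_{XY}) = S(\tau_X) + S(\tau_Y) - I(X:Y)_\tau,
\end{equation*}
applied to $\rho_{PQ}$, $\rho'_{PQ}$, $(\mathcal{I}_P\otimes\mathcal{R})(\rho_{PQ})$, and $(\mathcal{I}_P\otimes\mathcal{R})(\rho'_{PQ})$. Since $\rho_P=\rho'_P$ and $\rho_Q=\rho'_Q$ (so also $\mathcal{R}(\rho_Q)=\mathcal{R}(\rho'_Q)$), every single-system marginal entropy cancels when I subtract pairs of these joint entropies, leaving
\begin{equation*}
S(\rho_{PQ}) - S(\rho'_{PQ}) = I(P:Q)_{\rho'} - I(P:Q)_\rho
\end{equation*}
and an analogous identity on the $\QQ$ side. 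The lemma therefore reduces to showing
\begin{equation*}
I(P:Q)_\rho - I(P:\QQ)_{\mathcal{R}(\rho)} = I(P:Q)_{\rho'} - I(P:\QQ)_{\mathcal{R}(\rho')}.
\end{equation*}

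Second, I would prove the stronger statement that each side separately vanishes. For $\rho$, apply monotonicity of mutual information twice: monotonicity under $\mathcal{I}_P\otimes\mathcal{R}$ gives $I(P:Q)_\rho\geq I(P:\QQ)_{\mathcal{R}(\rho)}$, and monotonicity under $\mathcal{I}_P\otimes\mathcal{T}$ gives $I(P:\QQ)_{\mathcal{R}(\rho)}\geq I(P:Q)_{\mathcal{T}\mathcal{R}(\rho)}$. The hypothesis $(\mathcal{I}_P\otimes\mathcal{T}\mathcal{R})(\rho_{PQ})=\rho_{PQ}$ says the rightmost quantity equals $I(P:Q)_\rho$, so equality is forced throughout the chain; in particular $I(P:Q)_\rho = I(P:\QQ)_{\mathcal{R}(\rho)}$. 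Repeating the argument with $\rho'$ yields the second equality, and combining with the decomposition above gives the desired entropy identity.

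I do not expect a substantive obstacle: the whole argument is essentially the observation that the reversibility hypothesis is exactly the saturation condition for data processing of $P{:}Q$ mutual information. The only bookkeeping is the extension of $\mathcal{R}$ and $\mathcal{T}$ by the identity on $P$ whenever invoking monotonicity, which is routine. An alternative route would invoke the equality case of monotonicity of the relative entropy $D(\rho_{PQ}\|\rho'_{PQ})$ directly, but extracting an identity of von Neumann entropies from that would then require exploiting the marginal conditions through extra log-trace manipulations, so passing through the mutual information seems the cleanest path.
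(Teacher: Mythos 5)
Your proposal is correct and takes essentially the same approach as the paper: both proofs hinge on the fact that the reversibility hypothesis forces saturation of data processing, giving $I(P:Q)_{\rho}=I(P:\QQ)_{\mathcal{R}(\rho)}$ and likewise for $\rho'$, and then use the equal marginals to trade mutual-information differences for entropy differences. The only distinction is ordering — you reduce the entropy identity to the mutual-information identity first and then invoke monotonicity, whereas the paper proves the mutual-information equalities first and converts at the end — which is purely presentational.
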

\noindent
\begin{figure}[t]
    \centering
    \includegraphics[]{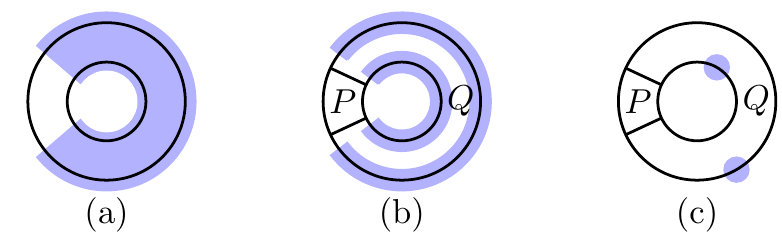}
    \caption{Using the procedure in Fig.~\ref{fig:past-causal-cone}, we remove the gates in $U$ that act deep in the interior of the annulus without changing the entropy; starting from $U$, whose support is depicted as the blue region in (a), we obtain a unitary $\overline{U}$, whose support is shown in (b). Then the support of $\overline{U}$ becomes a union of two disks, which is topologically equivalent to (c) after regrouping sites.}
    \label{fig:pp2tilde}
\end{figure}

\begin{figure}[t]
    \centering
    \begin{tikzpicture}
    \node[inner sep=5pt] (pic1) at (0,0) {\includegraphics[width=0.275\columnwidth]{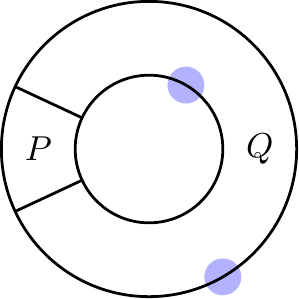}};
    \node[inner sep=5pt] (pic2) at (4,0) {\includegraphics[width=0.275\columnwidth]{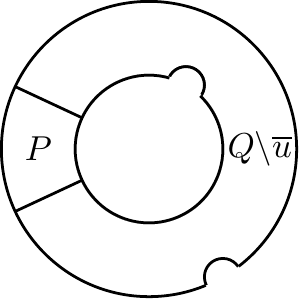}};
    \draw[->] (pic1) to node[below] {$\textrm{Tr}_{Q{\cap}\overline{u}}$} (pic2);
    \end{tikzpicture} \\
    \begin{tikzpicture}
    \node[inner sep=5pt] (pic1) at (0,0) {\includegraphics[width=0.275\columnwidth]{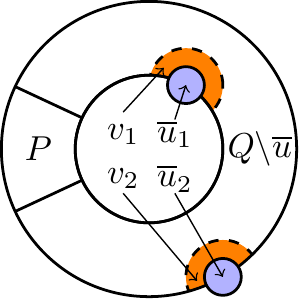}};
    \node[inner sep=5pt] (pic2) at (4,0) {\includegraphics[width=0.275\columnwidth]{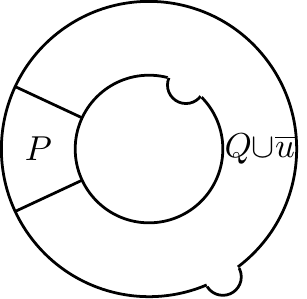}};
    \draw[->] (pic1) to node[below] {$\Phi_{v\to v\overline{u}}^{\sigma}$} (pic2);
    \end{tikzpicture}
    \caption{ 
    The construction of $\mathcal{R}$.  The first step is the partial trace $\Tr_{\overline{u}}$ over the support of $\overline{U}$, and the second step applies the Petz map $\Phi_{v \to v\overline{u}}^\sigma$ where $v=v_1 v_2$ and $\overline{u}=\overline{u}_1 \overline{u}_2$. The blue subsystem is $\overline{u}$, the support of $\overline{U}$. }
    \label{fig:iso_forward}
\end{figure}

To apply Lemma~\ref{lemma:ent_diff-v2} to our setup, we let $\rho = \overline{\lambda}$ and $\rho' = \overline{\sigma}$ where 
$\overline{\lambda} = \overline{U} \lambda \overline{U}^\dagger$ and $\overline{\sigma} = \overline{U} \sigma \overline{U}^\dagger$ and where $\overline{U}$ is a unitary obtained by removing the gates in $U$ that are deep in the interior of the annulus $ABC$ [Fig.~\ref{fig:pp2tilde}(a-b)].
We then let $P, Q$ be a partition of the annulus $ABC$ of the form shown in Fig.~\ref{fig:pp2tilde}, such that $P\subset A$ is sufficiently far away from the support of $\overline{U}$.\footnote{More precisely, $P$ must be at least two lattice spacings away from the support of $\overline{U}$; see Fig.~\ref{fig:annulus_to_chain}~for details.}
By construction, $\overline{\lambda}$ and $\overline{\sigma}$ are indistinguishable on $P$. In Appendix~\ref{appendix:max-ent_2d_new}, we show that the two states are indistinguishable on $Q$ as well [Eq.~\ref{eq:fact_second}], thus fulfilling the premise of Lemma~\ref{lemma:ent_diff-v2}.


Below we will construct quantum channels  $\mathcal{R} : Q \to \QQ$ and $\mathcal{T} : \QQ \to Q$, with $\QQ \equiv Q \cup \overline{u}$ where  $\overline{u}$ is the support of $\overline{U}$. These will obey (\ref{eq:rtcond}) with $\mathcal{R}(\overline{\lambda}_{PQ}) = \lambda_{PQ{\cup}\overline{u}}$ and $\mathcal{R}(\overline{\sigma}_{PQ}) = \sigma_{PQ{\cup}\overline{u}}$.
Because $PQ{\cup}\overline{u}=Y$ and also $S(\overline{\lambda}_{PQ}) = S(\widetilde{\lambda}_{PQ})$ and $S(\overline{\sigma}_{PQ}) = S(\widetilde{\sigma}_{PQ})$, once we construct these channels, we can immediately deduce Eq.~\eqref{eq:ent_diff_invariant} from Lemma~\ref{lemma:ent_diff-v2}. This will then complete our proof of the bound (\ref{eq:rho_sigma_CMI}), as explained earlier.

Now let us discuss our construction of $\mathcal{R}$ and $\mathcal{T}$. These maps are constructed from compositions of $\overline{U}$, partial trace, and the Petz map~\cite{Petz1987}.

To clearly render the construction of $\mathcal{R}$ and $\mathcal{T}$, we depict $\overline{u}$ as two disks of smaller sizes, as in Fig.~\ref{fig:pp2tilde}(c). Loosely speaking, $\mathcal{R}$ removes the circuits in the disks and $\mathcal{T}$ restores them.

The map $\mathcal{R}$ is constructed by applying a partial trace followed by a Petz map, best described by Fig.~\ref{fig:iso_forward}. In the first step, we trace out the region $Q \cap \overline{u}$.
This step effectively removes the circuit $\overline{U}$, mapping $\overline{\lambda}_{PQ}$ to $\lambda_{PQ{\setminus}\overline{u}}$ and $\overline{\sigma}_{PQ}$ to $\sigma_{PQ{\setminus}\overline{u}}$. In the second step, we apply the Petz map $\Phi_{v\to v\overline{u}}^{\sigma}$. We show in Appendix \ref{appendix:max-ent_2d_new}, Eq.~\ref{eq:fact_third}, that this step extends $\lambda_{PQ{\setminus}\overline{u}}$ to $\lambda_{PQ{\cup}\overline{u}}$ and $\sigma_{PQ{\setminus}\overline{u}}$ to $\sigma_{PQ{\cup}\overline{u}}$:
\begin{equation}
    \lambda_{PQ{\cup}\overline{u}} =\Phi_{v\to v\overline{u}}^{\sigma}(\lambda_{PQ{\setminus}\overline{u}}), \,\, \sigma_{PQ{\cup}\overline{u}} =\Phi_{v\to v\overline{u}}^{\sigma}(\sigma_{PQ{\setminus}\overline{u}}). \label{eq:newfact1}
\end{equation}
Combining the two steps we see that $\mathcal{R}$ maps $\overline{\lambda}_{PQ}$ to $\lambda_{PQ{\cup}\overline{u}}$ and 
$\overline{\sigma}_{PQ}$ to $\sigma_{PQ{\cup}\overline{u}}$, as required. As for the map $\mathcal{T}$, this can be constructed by simply applying $\overline{U}$ and tracing out $\overline{u} \setminus Q$. Clearly these operations map $\lambda_{PQ{\cup}\overline{u}}$ to $\overline{\lambda}_{PQ}$ and $\sigma_{PQ{\cup}\overline{u}}$ to $\overline{\sigma}_{PQ}$, as required.

\textbf{\textit{Discussion ---}}
The lower bound \eqref{eq:rho_sigma_CMI} can be generalized to the case that $\sigma$ contains an anyon in the interior of the annulus, because the proof only required the invariance of the TEE under deformations of the annulus. The lower bound then becomes $\gamma_0=\log (\mathcal{D}/d_a)$~\cite{Shi2020} for anyon $a$ in the interior. We expect similar lower bounds can be derived in a variety of setups, including systems with defects or higher dimensional systems.

Although we have only proven the lower bound \eqref{eq:gamma_nonneg} for states obtained by constant-depth circuits acting on reference states, we expect that \eqref{eq:gamma_nonneg} holds more generally. In fact, we argue heuristically in Appendix \ref{sec:stacking} that \eqref{eq:gamma_nonneg} holds for any 2D gapped bosonic ground state. The key idea is to use the fact that reference states can already realize all ``doubled’’ 2D topological phases obtained by stacking a bosonic topological phase onto its time-reversed partner \cite{Levin2006,Hahn2020,lin2021generalized}.

Our Definition~\ref{assumption:TEE-inv} and bound~\eqref{eq:rho_sigma_CMI} actually apply beyond area law states. For instance, coupling an area law reference state to a hot surface (modeled by an identity density matrix) for a short period of time cannot decrease the TEE of the joint system. We speculate that similar arguments may apply to the 3D toric code at finite temperature~\cite{Castelnovo2008}.

An interesting open question is whether our bounds apply to the TEE defined using the alternative partition in Ref.~\cite{Kitaev2006}. It would also be interesting to know whether similar results hold for a TEE defined via R\'{e}nyi entropies, which are easier to measure in quantum simulators \cite{satzinger2021realizing}. 

A final question is to understand how generically our bound (\ref{eq:gamma_nonneg}) is \emph{saturated.} How much fine-tuning is required to obtain a spurious TEE that does not decay with distance? Despite hints in this direction \cite{Zou2016, Williamson2019,stephen2019detecting,oliviero2022stability}, the general question remains open.

\begin{acknowledgments}
D.R. thanks Aram Harrow and Jonathan Sorce for discussion.
D.R. acknowledges support
from NTT (Grant AGMT DTD 9/24/20). This work was supported by the Simons Collaboration on Ultra-Quantum Matter, which is a grant from the Simons Foundation (651442, M.L.; 652264, B.S.).
T.C.L. was supported in part by funds provided by the U.S. Department of Energy (D.O.E.) under the cooperative research agreement DE-SC0009919.
\end{acknowledgments}

\bibliography{bib}

\newpage
\appendix

\section{Proof of Lemma~\ref{lemma:ent_diff-v2}}
\label{appendix:proof_ent_diff}
Here we present the proof of Lemma~\ref{lemma:ent_diff-v2}~in the main text.
\begin{proof}
	Define $\widetilde{\rho}_{P\QQ} \equiv \mathcal{R}(\rho_{PQ})$ and $\widetilde{\rho}_{P\QQ}' \equiv \mathcal{R}(\rho_{PQ}')$.
	We claim that the mutual information obeys
	\begin{equation}\label{eq:mutual-lemma-proof}
		I(P:Q)_{\rho} = I(P:\QQ)_{\widetilde{\rho}}.
	\end{equation}
   To see this, recall that mutual information is nonincreasing under quantum channels acting on one party. This monotonicity property implies two inequalities between $I(P:Q)_{\rho}$  and $I(P:\QQ)_{\widetilde{\rho}}$ in two opposite directions (using channels $\mathcal{R}$ and $\mathcal{T}$ respectively), yielding the desired equality (\ref{eq:mutual-lemma-proof}). 
   
Applying the same argument to the density matrix $\rho'$ gives
	\begin{equation}\label{eq:mutual-lemma-proof2}
		I(P:Q)_{\rho'} = I(P:\QQ)_{\widetilde{\rho}'}.
	\end{equation}
Taking the difference between (\ref{eq:mutual-lemma-proof}) and (\ref{eq:mutual-lemma-proof2}), we deduce that
   \begin{equation}
   	I(P:Q)_{\rho'}-I(P:Q)_{\rho}
   	=  I(P:\QQ)_{\widetilde{\rho}'} -  I(P:\QQ)_{\widetilde{\rho}},
  \label{eq:mutual-lemma-proof3}
   \end{equation}
To complete the argument, recall that $\rho_P= \rho'_P$ and $\rho_Q= \rho'_Q$. It follows that $\widetilde{\rho}_{\QQ}=\widetilde{\rho}'_{\QQ}$ and
   $\widetilde{\rho}_P=\widetilde{\rho}'_P$ since $\mathcal{R}$ only acts on $Q$. Substituting these equalities into (\ref{eq:mutual-lemma-proof3}), gives the desired identity
   \begin{equation}
   	S(\rho_{PQ}) - S(\rho_{PQ}')  
   	= S(\widetilde{\rho}_{P\QQ}) - S(\widetilde{\rho}_{P\QQ}').
   \end{equation}
   
\end{proof}

\section{Quantum Markov chains}
\label{sec:ssa}
In this Appendix, we collect various known facts about quantum entropies and Markov states, used extensively in this paper. We begin with the strong subadditivity of entropy (SSA), which states that~\cite{Lieb1973} 
\begin{equation}
    I(A:C|B)_{\rho}\geq 0
\end{equation}
for any tripartite density matrix $\rho_{ABC}$. A straightforward application of SSA leads to the following inequalities,
\begin{equation}
\label{eq:ssa_mono}
	\begin{aligned}
		I(AA':C|B)_{\rho} &\geq I(A':C|B)_{\rho}, \\
		I(AA':CC'|B)_{\rho} & \geq I(A:C|A'BC')_{\rho},
	\end{aligned}	
\end{equation}
for any density matrix $\rho$.

If a tripartite quantum state $\rho_{ABC}$ satisfies the SSA with equality, i.e., $I(A:C|B)_{\rho}=0$, such a state is called a quantum Markov chain~\cite{Petz1987}. An important property of quantum Markov chains is the existence of the \emph{Petz map}. If $\rho_{ABC}$ is a quantum Markov chain, there is a quantum channel $\Phi^\rho_{B\to BC}: \mathcal{B}(\mathcal{H}_B) \to \mathcal{B}(\mathcal{H}_{BC})$ such that 
\begin{equation} \label{eq:Petz_recovery_old}
    \mathcal{I}_A\otimes \Phi^\rho_{B\to BC} (\rho_{AB}) = \rho_{ABC},
\end{equation}
where $\mathcal{B}(\mathcal{H})$ is the space of bounded operators acting on $\mathcal{H}$ and $\mathcal{I}_A$ is the identity channel on $A$. We remark that it is customary to omit the superoperator $\mathcal{I}_A$ if the domain of the Petz map is obvious from context, a convention that we shall adhere to, i.e., 
\begin{equation} \label{eq:Petz_recovery}
    \Phi^\rho_{B\to BC} (\rho_{AB}) = \rho_{ABC}.
\end{equation}
The Petz map can be defined regardless of whether $\rho_{ABC}$ is a quantum Markov chain, and it depends only on the reduced density matrix $\rho_{BC}$. If $\rho_B$ is invertible (though this is not required~\cite{Hiai2011}), it admits a particularly simple form:
\begin{equation}
    \Phi_{B\to BC}^{\rho}(\cdot) = \rho_{BC}^{\frac{1}{2}} \rho_B^{-\frac{1}{2}}(\cdot) \rho_B^{-\frac{1}{2}}\rho_{BC}^{\frac{1}{2}}.
\end{equation}
Nevertheless, the precise form of the Petz map will not be important in our analysis. 

An important point that we will use below is that
\eqref{eq:Petz_recovery} holds \emph{if and only if} $I(A:C|B)_{\rho}=0$. In particular, \eqref{eq:Petz_recovery} implies that $\rho_{ABC}$ is a quantum Markov chain. This result can be derived using the fact that 
channels acting on a single party cannot increase mutual information: if (\ref{eq:Petz_recovery}) holds then $I(A:BC)_\rho \leq I(A:B)_\rho$, which implies that $I(A:C|B)_\rho \leq 0$ and hence $I(A:C|B)_\rho = 0$ by SSA.

We remark that the property \eqref{eq:Petz_recovery} of the Petz map directly implies that two locally indistinguishable quantum Markov chains are globally the same. Without loss of generality, let $\rho$ and $\sigma$ be quantum Markov chains, i.e., $I(A:C|B)_{\rho}= I(A:C|B)_{\sigma}=0$. This implies
\begin{equation}
    \begin{aligned}
        \rho_{ABC} &= \Phi_{B\to BC}^{\rho}(\rho_{AB}), \\
        \sigma_{ABC} &= \Phi_{B\to BC}^{\sigma}(\sigma_{AB}).
    \end{aligned}
\end{equation}
If $\rho_{AB} = \sigma_{AB}$ and $\rho_{BC}=\sigma_{BC}$, it follows from the property of the Petz map that $\rho_{ABC} = \sigma_{ABC}$. We therefore conclude the following.
\begin{lemma}
Let $\rho_{ABC}$ and $\sigma_{ABC}$ be states such that $I(A:C|B)_{\rho} = I(A:C|B)_{\sigma}=0$. If $\rho_{AB} = \sigma_{AB}$ and $\rho_{BC}=\sigma_{BC}$, then $\rho_{ABC} =\sigma_{ABC}$.
\label{lemma:markov_chain_local_to_global}
\end{lemma}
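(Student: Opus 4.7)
The plan is to deduce the lemma directly from the Petz recovery property established earlier in the same appendix, together with the observation that the Petz map $\Phi^\rho_{B \to BC}$ depends on $\rho$ only through its reduced density matrix on $BC$.

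First, I would apply the characterization of quantum Markov chains recalled above: since $I(A:C|B)_\rho = 0$, equation \eqref{eq:Petz_recovery} gives
\begin{equation}
    \rho_{ABC} = \Phi^{\rho}_{B \to BC}(\rho_{AB}),
\end{equation}
and similarly $\sigma_{ABC} = \Phi^{\sigma}_{B \to BC}(\sigma_{AB})$ since $I(A:C|B)_\sigma = 0$. Next, I would invoke the fact emphasized in the excerpt that the Petz map $\Phi^{\rho}_{B\to BC}$ is determined entirely by the reduced density matrix $\rho_{BC}$. Combined with the hypothesis $\rho_{BC} = \sigma_{BC}$, this gives the operator-level identity $\Phi^{\rho}_{B\to BC} = \Phi^{\sigma}_{B\to BC}$ as quantum channels from $\mathcal{B}(\mathcal{H}_B)$ to $\mathcal{B}(\mathcal{H}_{BC})$.

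Finally, using the remaining hypothesis $\rho_{AB} = \sigma_{AB}$, I would apply this common channel to both sides (tensored with the identity on $A$, following the convention of \eqref{eq:Petz_recovery}) and conclude
\begin{equation}
    \rho_{ABC} = \Phi^{\rho}_{B\to BC}(\rho_{AB}) = \Phi^{\sigma}_{B\to BC}(\sigma_{AB}) = \sigma_{ABC},
\end{equation}
as desired. There is no serious obstacle: the only point worth flagging is that one might worry about invertibility of $\rho_B$ or $\sigma_B$ in the explicit formula for the Petz map, but this is not needed because the recovery property \eqref{eq:Petz_recovery} holds in full generality (as noted via the reference to \cite{Hiai2011}), and only this abstract property is used. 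The argument is essentially a restatement of the paragraph preceding the lemma, packaged as a formal deduction.
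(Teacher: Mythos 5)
Your proof is correct and is essentially identical to the paper's own argument: both use the Petz recovery property for each Markov chain, then note that $\Phi^{\rho}_{B\to BC}$ depends only on $\rho_{BC}=\sigma_{BC}$ so the two recovery channels coincide, and conclude by applying this common channel to $\rho_{AB}=\sigma_{AB}$. The remark about not needing invertibility of $\rho_B$ is a nice clarification but does not change the substance.
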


The notion of a quantum Markov chain can be straightforwardly generalized to $n$-partite states as well, for $n>3$. Let $\{X_1,\ldots, X_n \}$ be an \emph{ordered set} of subsystems that partition $X$. Intuitively, one can think of these subsystems as ``spins'' arranged along a one-dimensional chain, with the index representing the location of the spin. 

For the remainder of the appendices, we will frequently consider three non-empty subsystems $A, B,$ and $C$ given by contiguous subsets of this spin chain. In such cases, we say that $A, B,$ and $C$ together are \emph{chain-like} subsystems (Fig.~\ref{fig:chain_like}).

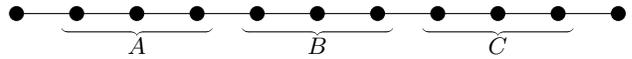
\begin{figure}[t]
  \begin{tikzpicture}[scale=0.8]
    \draw[] (0,0) -- (10,0);
    \foreach \x in {0,...,10}
             {
               \node[circle, fill=black, inner sep=2pt, minimum size=0.15cm] () at (\x, 0) {};
             }
             \draw [decorate, decoration = {calligraphic brace,mirror}] (0.75,-0.25) --  (3.25,-0.25) node[pos=0.5,black, below]{$A$};
             \draw [decorate, decoration = {calligraphic brace,mirror}] (3.75,-0.25) --  (6.25,-0.25) node[pos=0.5,black, below]{$B$};
             \draw [decorate, decoration = {calligraphic brace,mirror}] (6.75,-0.25) --  (9.25,-0.25) node[pos=0.5,black, below]{$C$};
  \end{tikzpicture}
  \caption{We call three non-empty subsystems $A$, $B$, and $C$ ``chain-like'' when they form three contiguous subsets as in the example above.
  \label{fig:chain_like}}
\end{figure}

Using the notion of chain-like subsystems, we can now present a generalization of  quantum Markov chains to $n$-partite systems.
\begin{definition}
  A quantum state $\sigma$ over ordered subsystems $\{X_1,\ldots, X_n \}$ is a Markov chain if $I(A:C|B)_{\sigma}=0$ for any chain-like subsystems $A,B,C$.
  \label{definition:markov_chain}
\end{definition}

\section{Locally Quantum Markov Chains}
\label{sec:local_vs_global_qmc}

We can also define a \emph{locally quantum Markov chain}, or locally Markov chain for short,  by only demanding the conditional mutual information constraints on proper subsystems.

\begin{definition}
  A quantum state $\sigma$ over $\{X_1,\ldots, X_n \}$ is a locally Markov chain if $I(A:C|B)_{\sigma}=0$ for any chain-like subsystems $A,B,C$ such that $ABC$ is a proper subset of $X$.
\end{definition}
\noindent
A locally quantum Markov chain need not satisfy the conditional mutual information constraint globally. That is, $I(A:C|B)\neq 0$ is allowed when $ABC = \cup_{i=1}^n X_i$.  For instance, $\rho = \frac{1}{2}(|0000\rangle\langle 0000| + |1111\rangle\langle 1111|)$ is an example of a locally Markov chain which is globally not a quantum Markov chain. In particular, a locally Markov chain is a weaker notion than quantum Markov chain.

However, this violation must be a constant for any partition of $X$ into chain-like subsystems. In fact, we show below that having a constant (possibly zero) violation is a necessary and sufficient condition for a state to be locally Markov: that is, a state $\sigma$ over $\cup_{i=1}^n X_i$ is locally Markov if and only if  $I(A:C|B)_{\sigma}$ is a constant for any chain-like $A, B,$ and $C$ that partition $X$.
\begin{proposition}
\label{proposition:lqmc_characterization}
Let $\sigma$ be a state over $X= \cup_{i=1}^n X_i$, where $n\geq 4$. Then the following statements are equivalent:
\begin{enumerate}
\item The state $\sigma$ is a locally Markov chain over $X$, i.e., $I(A:C|B)_{\sigma}=0$ for every choice of chain-like subsystems $A, B,$ and $C$ that form a proper subset of $X$.\label{item:locally_markov_1}
\item $I(A:C|B)_{\sigma}$ is a constant for any chain-like $A, B$, and $C$ that partition $X$. \label{item:locally_markov_2}
\end{enumerate}
\end{proposition}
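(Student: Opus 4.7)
The plan is to use the chain rule of conditional mutual information together with strong subadditivity (SSA), turning each ``boundary move'' in a partition into a residual CMI on a chain-like proper subsystem.

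For the direction $(1)\Rightarrow(2)$, I would show the partition CMI is invariant under single-site shifts of each boundary. Writing a partition as $A = X_1\cdots X_i$, $B = X_{i+1}\cdots X_j$, $C = X_{j+1}\cdots X_n$, the chain rule gives
\[
I(A:C|B) = I(X_i:C|X_{i+1}\cdots X_j) + I(X_1\cdots X_{i-1}:C|X_i\cdots X_j),
\]
and the first term on the right is a chain-like CMI whose union $X_i\cdots X_n$ is a proper subset of $X$, hence vanishes by statement~\ref{item:locally_markov_1}. Thus the shift $i\to i-1$ preserves $I(A:C|B)$, and a symmetric argument handles the right boundary. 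Since any two chain-like partitions of $X$ are connected by such single-site shifts while keeping all three regions nonempty, $I(A:C|B)$ is constant across partitions.

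For the direction $(2)\Rightarrow(1)$, given chain-like proper $A,B,C$ with $ABC\subsetneq X$ contiguous, write $X = D_\alpha\cdot A\cdot B\cdot C\cdot D_\beta$ where the boundary gaps $D_\alpha$ and $D_\beta$ are not both empty. Applying the chain rule twice yields
\[
I(D_\alpha A: CD_\beta|B) = I(A:C|B) + I(A:D_\beta|BC) + I(D_\alpha:CD_\beta|AB).
\]
The left-hand side equals $\gamma$ by statement~\ref{item:locally_markov_2} since $(D_\alpha A, B, CD_\beta)$ partitions $X$. All three terms on the right are nonnegative by SSA. When $D_\alpha\neq\emptyset$, the term $I(D_\alpha:CD_\beta|AB)$ is the CMI of the partition $(D_\alpha, AB, CD_\beta)$ and equals $\gamma$; when $D_\alpha=\emptyset$ (so $D_\beta\neq\emptyset$), the term $I(A:D_\beta|BC)$ is the CMI of the partition $(A, BC, D_\beta)$ and equals $\gamma$. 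In either case one of the two residual terms already saturates $\gamma$; combined with nonnegativity of $I(A:C|B)$ and the identity above, this forces $I(A:C|B)=0$.

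The main obstacle is bookkeeping: tracking which auxiliary triples form genuine chain-like partitions of $X$ across the cases distinguished by whether $D_\alpha$ and $D_\beta$ are empty, and verifying that their parts are nonempty, contiguous, and pairwise adjacent. Beyond this case analysis the argument is a mechanical application of the chain rule and SSA, and I do not expect any further difficulty.
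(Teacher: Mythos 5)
Your proof is correct and takes essentially the same approach as the paper: your chain-rule decompositions are precisely the paper's key identity (Eq.~\eqref{eq:equivalence_key}), and your single-block boundary shifts in $(1)\Rightarrow(2)$ mirror the paper's argument that the partition CMI is invariant under moving either boundary. The only difference is cosmetic, in $(2)\Rightarrow(1)$: you pad $ABC$ on both sides and cancel $\gamma$ against a residual full-partition term, whereas the paper extends $A$ to one end of the chain, deduces $I(A':C|B)=0$ from the equality of two partition CMIs, and then shrinks back to $A$ using the SSA monotonicity \eqref{eq:ssa_mono}; both routes are equally valid.
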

\begin{proof}
The equivalence of the two statements hinges on a simple identity. Consider contiguous non-empty subsystems $A, B, C,$ and $D$ that partition $X$; see Fig.~\ref{fig:partition_X}. Here is the key observation:
\begin{equation}
\begin{aligned}
    I(A:C|B)_{\sigma}=0 \Leftrightarrow I(A:CD|B)_{\sigma} &= I(A:D|BC)_{\sigma}, \\
    I(B:D|C)_{\sigma}=0 \Leftrightarrow I(AB:D|C)_{\sigma} &= I(A:D|BC)_{\sigma}.
\end{aligned}
\label{eq:equivalence_key}
\end{equation}
These facts can be shown straightforwardly by expanding the conditional mutual information in terms of von Neumann entropies. As we explain below, the left and the right hand side of Eq.~\eqref{eq:equivalence_key} can be related to \ref{item:locally_markov_1} and ~\ref{item:locally_markov_2} respectively, leading to the main claim.

We first show \ref{item:locally_markov_1}$\implies$\ref{item:locally_markov_2}. From~\ref{item:locally_markov_1}, $I(A:C|B)_{\sigma}=0$ and $I(B:D|C)_{\sigma}=0$ for any non-empty and contiguous subsystems $A, B, C,$ and $D$ that partition $X$. By Eq.~\eqref{eq:equivalence_key}, $I(A:CD|B)_{\sigma} = I(A:D|BC)_{\sigma} = I(AB:D|C)_{\sigma}$.

These identities can be understood better with the help of Fig.~\ref{fig:partition_X}. Consider two different partitions of $X$, one in terms of chain-like subsystems $A, B, CD$, and the other in terms of chain-like subsystems $A, BC, D$. In both cases, $X$ is partitioned into three parts, the subsystem on the left, middle, and right, which we denote as $L$, $M$, and $R$, respectively. Viewed this way, the identity $I(A:CD|B)_{\sigma} = I(A:D|BC)_{\sigma}$ means that the conditional mutual information $I(L:R|M)$ for $X=LMR$ is invariant under shifting the boundary between the middle and the right subsystem, as long as both subsystems remain non-empty. 

Similarly, the identity $I(A:D|BC)_{\sigma} = I(AB:D|C)_{\sigma}$ means that the conditional mutual information $I(L:R|M)$ is invariant under shifting the boundary between the left and the middle subsystem, again assuming both subsystems remain non-empty. Together, we see that $I(L:R|M)_{\sigma}$ is invariant under arbitrary shifts of both boundaries, assuming the subsystems $L$, $R$, and $M$ all remain non-empty. Since any partition of $X$ into chain-like subsystems can be obtained from an appropriate shift of the two boundaries,~\ref{item:locally_markov_2} follows.

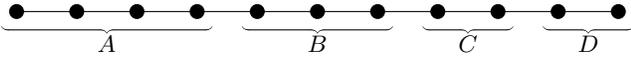
\begin{figure}[t]
  \begin{tikzpicture}[scale=0.8]
    \draw[] (0,0) -- (10,0);
    \foreach \x in {0,...,10}
             {
               \node[circle, fill=black, inner sep=2pt, minimum size=0.15cm] () at (\x, 0) {};
             }
             \draw [decorate, decoration = {calligraphic brace,mirror}] (-0.25,-0.25) --  (3.25,-0.25) node[pos=0.5,black, below]{$A$};
             \draw [decorate, decoration = {calligraphic brace,mirror}] (3.75,-0.25) --  (6.25,-0.25) node[pos=0.5,black, below]{$B$};
             \draw [decorate, decoration = {calligraphic brace,mirror}] (6.75,-0.25) --  (8.25,-0.25) node[pos=0.5,black, below]{$C$};
             \draw [decorate, decoration = {calligraphic brace,mirror}] (8.75,-0.25) --  (10.25,-0.25) node[pos=0.5,black, below]{$D$};
  \end{tikzpicture}
  \caption{Partition of $X$ into contiguous non-empty subsystems $A, B, C,$ and $D$.
  \label{fig:partition_X}}
\end{figure}

Now we show \ref{item:locally_markov_2}$\implies$\ref{item:locally_markov_1}. Without loss of generality, consider chain-like subsystems $A, B,$ and $C$, which together form a proper subset of $X$. Because $ABC$ is a proper subset, either $X_1 \not\subset A$ or $X_n\not\subset C$. 

Consider the latter case first. Without loss of generality, let $A=\cup_{k=i}^j X_k$. We can define $A' = \cup_{k=1}^j X_k\supset A$. Consider a partition of $X$ into chain-like subsystems $A', B, CD$ and also $A', BC, D$. From ~\ref{item:locally_markov_2}, we see that
\begin{equation}
    I(A':CD|B)_{\sigma} = I(A':D|BC)_{\sigma}.
\end{equation}
From (\ref{eq:equivalence_key}) we obtain
\begin{equation}
    I(A':C|B)_{\sigma}=0.
\end{equation}
Since $A\subset A'$, from Eq.~\eqref{eq:ssa_mono} we deduce that $I(A:C|B)_{\sigma}=0$. The case in which $X_1\not\subset A$ can be dealt with in a similar way, establishing  \ref{item:locally_markov_2} $\implies$ \ref{item:locally_markov_1}. 

\end{proof}

We now show that, given a locally Markov chain $\sigma$ over $n \geq 4$ subsystems,
there is a unique Markov chain state (in the sense of Definition~\ref{definition:markov_chain}) that is locally indistinguishable from $\sigma$ over every \emph{connected} proper subsystem. By connected, we mean the subsystem is a union of contiguous subsystems, e.g., $\{X_i, X_{i+1},\ldots, X_{j-1}, X_j \}$. This state, which we denote as $\tau$, is defined in the following recursive fashion
 \begin{align}
 \tau_{k+1} = \Phi_{X_k\to X_k X_{k+1}}^{\sigma}(\tau_k), \quad \forall k \in \{2,\cdots, n-1\} 
     \label{eq:tau_definition}
\end{align}
where $\tau := \tau_n$,  $\tau_2 := \sigma_{X_1X_2},$ and $\Phi_{X_k \to X_{k} X_{k+1}}^{\sigma}$ is the Petz map defined in terms of $\sigma$. 

Here are some useful properties of $\tau$, which we collate for the readers' convenience.\footnote{Some of these properties follow from~\cite[Section 5.2]{kim2021entropy}, but we provide self-contained proofs here for completeness.}
\begin{itemize}
    \item $\tau_Y = \sigma_Y$ for any connected proper subsystem $Y$. (Proposition~\ref{proposition:indistinguishability})
    \item $\tau$ is a Markov chain. (Proposition~\ref{proposition:tau_markov})
    \item Let $A, B,$ and $C$ be chain-like subsystems that partition $\{X_1,\ldots, X_n \}$. Then $\tau$ is the maximum-entropy state indistinguishable from $\sigma$ over $AB$ and $BC$. (Proposition~\ref{proposition:max_entropy})
\end{itemize}

Because $\tau$ 
is the unique maximum-entropy state indistinguishable from $\sigma$ over every connected proper subsystem, there is a sense in which $\tau$ is canonical. We thus refer to it as the \emph{canonical Markov chain} from now on. A more formal definition is stated below. 
\begin{definition}
\label{definition:canonical_markov_chain}
Let $\sigma$ be a locally Markov chain over an ordered set $\{X_1,\ldots, X_n \}$, where $n \geq 4$. The canonical Markov chain $\tau$ associated to $\sigma$ and $\{X_1,\ldots, X_n \}$ is given by
\begin{equation}
    \tau_{k+1} = \Phi_{X_k \to X_kX_{k+1}}^{\sigma}(\tau_k), \quad \forall k \in \{2,\cdots, n-1\} 
\end{equation}
where $\tau := \tau_n$ and $\tau_2 := \sigma_{X_1X_2}$. 
\end{definition}

We now prove properties of canonical Markov chains. Throughout these proofs, $\tau$ will represent the canonical Markov chain associated to a locally Markov chain $\sigma$, defined over $\{X_1,\ldots, X_n \}$.
\begin{proposition}
\label{proposition:indistinguishability}
For every connected proper subsystem $Y\subset X$, $\tau_Y = \sigma_Y$. 
\end{proposition}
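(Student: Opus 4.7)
The plan is to first establish the intermediate statement that $\tau_k = \sigma_{X_1 \ldots X_k}$ for all $2 \leq k \leq n-1$, and then to deduce the proposition by a case analysis on $Y$. The key tool is the Petz recovery identity $\Phi^\sigma_{X_k \to X_k X_{k+1}}(\sigma_{X_j \ldots X_k}) = \sigma_{X_j \ldots X_{k+1}}$, which holds precisely when $I(X_j \ldots X_{k-1} : X_{k+1} | X_k)_\sigma = 0$; by the locally Markov hypothesis this vanishing is guaranteed whenever $X_j \ldots X_{k+1}$ is a proper subset of $X$.

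For the intermediate statement, I would induct on $k$. The base case $\tau_2 = \sigma_{X_1 X_2}$ is definitional, and the inductive step uses the Petz identity with $j=1$, valid because $X_1 \ldots X_{k+1}$ is proper for every $k+1 \leq n-1$. This gives $\tau_{n-1} = \sigma_{X_1 \ldots X_{n-1}}$, so $\tau = \tau_n = \Phi^\sigma_{X_{n-1} \to X_{n-1} X_n}(\sigma_{X_1 \ldots X_{n-1}})$.

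For the main claim, I would split on how the connected proper $Y$ sits with respect to $X_{n-1}$ and $X_n$. (i) If $Y \subseteq X_1 \ldots X_{n-2}$, trace preservation of the final Petz map on its $X_{n-1}$ input reduces tracing $X_{n-1} X_n$ from $\tau$ to tracing $X_{n-1}$ from $\tau_{n-1}$, after which further tracing yields $\sigma_Y$. (ii) If $Y$ contains $X_n$, properness forces $Y = X_j \ldots X_n$ with $j \geq 2$; tracing $X_1 \ldots X_{j-1}$ commutes with the Petz (disjoint supports), and the Petz identity applies because $X_j \ldots X_n$ is proper. (iii) If $Y = X_j \ldots X_{n-1}$ with $j \geq 2$, applying (ii) to $Y \cup \{X_n\}$ and then tracing $X_n$ yields $\sigma_Y$.

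The main obstacle is the remaining case (iv) $Y = X_1 \ldots X_{n-1}$, where naive Petz recovery would demand $I(X_1 \ldots X_{n-2} : X_n | X_{n-1})_\sigma = 0$, but this CMI is not covered by the local Markov hypothesis (its supporting region is $X$ itself). My plan is to invoke Lemma~\ref{lemma:markov_chain_local_to_global} with $A = X_1$, $B = X_2 \ldots X_{n-2}$, $C = X_{n-1}$ and $\eta := \tau|_{X_1 \ldots X_{n-1}}$. I would check that (a) $I(A:C|B)_\sigma = 0$ since $ABC = X_1 \ldots X_{n-1}$ is now a proper subset of $X$; (b) $\eta$ is obtained from $\sigma_{X_1 \ldots X_{n-1}}$ by applying the channel $\mathrm{Tr}_{X_n} \circ \Phi^\sigma_{X_{n-1} \to X_{n-1} X_n}$, which acts only on $C = X_{n-1}$, so by the data processing inequality for conditional mutual information also $I(A:C|B)_\eta \leq I(A:C|B)_\sigma = 0$; (c) $\eta$ and $\sigma_{X_1 \ldots X_{n-1}}$ agree on $AB$ (the Petz leaves $X_1 \ldots X_{n-2}$ untouched) and on $BC$ (by the Petz identity applied to $\sigma_{X_2 \ldots X_{n-1}}$, whose support $X_2 \ldots X_n$ is proper, followed by $\mathrm{Tr}_{X_n}$). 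Lemma~\ref{lemma:markov_chain_local_to_global} then forces $\eta = \sigma_{X_1 \ldots X_{n-1}} = \sigma_Y$, completing the proof.
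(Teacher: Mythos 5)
Your proof is correct, and its skeleton matches the paper's: both arguments first establish (you by explicit induction, the paper more implicitly) that $\tau = \Phi^{\sigma}_{X_{n-1}\to X_{n-1}X_n}(\sigma_{X_1\cdots X_{n-1}})$, and then read off the marginals of $\tau$ on connected proper subsystems from this formula. The genuine difference is in how the hardest marginal, $Y = X_1\cdots X_{n-1}$, is handled. The paper stays entirely within channel algebra: it rewrites $\sigma_{X_1\cdots X_{n-1}} = \Phi^{\sigma}_{X_2\to X_1X_2}(\sigma_{X_2\cdots X_{n-1}})$, commutes the two Petz maps (disjoint supports), and uses $I(X_n:X_2\cdots X_{n-2}|X_{n-1})_{\sigma}=0$ to obtain $\tau = \Phi^{\sigma}_{X_2\to X_1X_2}(\sigma_{X_2\cdots X_n})$; tracing out $X_n$ then lands directly on $\sigma_{X_1\cdots X_{n-1}}$. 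You instead invoke Lemma~\ref{lemma:markov_chain_local_to_global}: you check that $\tau_{X_1\cdots X_{n-1}}$ and $\sigma_{X_1\cdots X_{n-1}}$ are both Markov for the split $A=X_1$, $B=X_2\cdots X_{n-2}$, $C=X_{n-1}$ --- for $\tau$ via the data-processing inequality for conditional mutual information under the channel $\mathrm{Tr}_{X_n}\circ\Phi^{\sigma}_{X_{n-1}\to X_{n-1}X_n}$ acting on $C$ alone --- and that they share the $AB$ and $BC$ marginals, hence coincide. Both routes rest on the same two vanishing conditional mutual informations licensed by local Markovness on proper subsets; the paper's is more economical (pure channel manipulation, no appeal to DPI or to a uniqueness lemma), while yours makes explicit why the global quantity $I(X_1\cdots X_{n-2}:X_n|X_{n-1})_{\sigma}$, which is \emph{not} controlled by the hypothesis, is never needed. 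One cosmetic simplification: your cases (i) and (iii) are redundant, because (as the paper observes) every connected proper subsystem is contained in $X_2\cdots X_n$ or in $X_1\cdots X_{n-1}$; so only your case (ii) with $j=2$ and your case (iv) are required, with all remaining marginals obtained by further partial traces.
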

Note $\tau_Y = \Tr_{X \backslash Y } \tau $ refers to the restriction of $\tau$ on $X$ to $Y \subset X$, rather than the canonical Markov chain defined with respect to $\sigma_Y$.
\begin{proof}
We first prove the claim for $Y= \cup_{i=2}^n X_i$. Because $\sigma$ is a locally Markov chain, $I(X_k:X_2\ldots X_{k-2}|X_{k-1})=0$ for every $4\leq k\leq n$. Therefore, one can obtain $\sigma_Y$ by repeatedly applying the Petz map $\Phi_{X_k\to X_{k} X_{k+1}}^{\sigma}$, from $k=3$ to $n-1$:
\begin{equation}
    \sigma_Y = \Phi_{X_{n-1}\to X_{n-1}X_n}^{\sigma}\circ\ldots \circ\Phi_{X_3\to X_3 X_4}^{\sigma}(\sigma_{X_2 X_3}).
\end{equation}

Now let us find an expression for $\tau_Y = \text{Tr}_{X_1}(\tau)$. Because all the channels appearing in the definition of $\tau$ (Definition~\ref{definition:canonical_markov_chain}) act trivially on $X_1$, the partial trace over $X_1$ can be applied before applying those channels. The resulting expression is 
\begin{equation}
    \tau_Y = \Phi_{X_{n-1}\to X_{n-1}X_n}^{\sigma}\circ\ldots \circ\Phi_{X_2\to X_2 X_3}^{\sigma}(\sigma_{X_2}).
\end{equation}
Since $\Phi_{X_2\to X_2 X_3}^{\sigma}(\sigma_{X_2}) = \sigma_{X_2 X_3}$, we derive $\tau_Y = \sigma_Y$.

Now we prove the claim for $Y'=\cup_{i=1}^{n-1} X_i$. 
Note that, because $\sigma$ is a locally Markov chain, we can relate $\tau$ and $\sigma$ in the following way:
\begin{equation}
\tau=    \Phi_{X_{n-1}\to X_{n-1}X_n}^{\sigma}(\sigma_{X_1 \dots X_{n-1}}), \label{eq:indistinguishability_helper1}
\end{equation}
Next we convert $\sigma_{X_1 \dots X_{n-1}}$ into another form, using the fact that $I(X_1:X_3\ldots X_n|X_2)_{\sigma}=0$:
\begin{equation}
    \sigma_{X_1 \dots X_{n-1}} = \Phi_{X_2\to X_1X_2}^{\sigma}(\sigma_{X_2 \dots X_{n-1}}). \label{eq:indistinguishability_helper2}
\end{equation}
Note that the channel appearing in~\eqref{eq:indistinguishability_helper2} and the channel appearing in~\eqref{eq:indistinguishability_helper1} act on disjoint subsystems, and as such, they can be exchanged. Therefore, 
\begin{equation}
\begin{aligned}
    \tau 
    &= \Phi_{X_{n-1}\to X_{n-1}X_n}^{\sigma}(\Phi_{X_2\to X_1X_2}^{\sigma}(\sigma_{X_2 \dots X_{n-1}})) \\
    &= \Phi_{X_2\to X_1X_2}^{\sigma}(\Phi_{X_{n-1}\to X_{n-1}X_n}^{\sigma}(\sigma_{X_2 \dots X_{n-1}})) \\
    &= \Phi_{X_2\to X_1X_2}^{\sigma}(\sigma_{X_2 \dots X_n}),
\end{aligned}
\end{equation}
 where in the last line we used the fact that $I(X_n:X_2\ldots X_{n-2} |X_{n-1})_{\sigma} = 0$.

Tracing out $X_n$, we obtain 
\begin{equation}
\begin{aligned}
    \tau_{Y'} &= \Phi_{X_2\to X_1X_2}^{\sigma}(\sigma_{X_2\ldots X_{n-1}}) \\
    &= \sigma_{Y'}.
\end{aligned}
\end{equation}
Since any proper connected subsystem is included in $Y$ or $Y'$, the main claim follows.

\end{proof}

\begin{proposition}
\label{proposition:tau_markov}
$\tau$ is a quantum Markov chain. 
\end{proposition}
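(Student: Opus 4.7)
My plan is to verify $I(A:C|B)_\tau = 0$ for every chain-like triple $A,B,C$ by splitting on whether $ABC$ fills all of $X$. The case $ABC \subsetneq X$ is immediate: Proposition~\ref{proposition:indistinguishability} gives $\tau_{ABC} = \sigma_{ABC}$, and since $ABC$ is a chain-like proper subset of $X$, the local Markov property of $\sigma$ forces $I(A:C|B)_\sigma = 0$, and therefore $I(A:C|B)_\tau = 0$.

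The substantive case is $ABC = X$, which forces $X_1 \in A$ and $X_n \in C$. Here I would first establish the single ``key'' identity $I(X_1\cdots X_{n-2}:X_n|X_{n-1})_\tau = 0$, and then reduce every partition of $X$ to it. For the key identity, recall from the proof of Proposition~\ref{proposition:indistinguishability} that $\tau = \Phi^\sigma_{X_{n-1}\to X_{n-1}X_n}(\sigma_{X_1 \cdots X_{n-1}})$. Proposition~\ref{proposition:indistinguishability} also supplies $\tau_{X_1\cdots X_{n-1}} = \sigma_{X_1\cdots X_{n-1}}$ and $\tau_{X_{n-1}X_n} = \sigma_{X_{n-1}X_n}$, the latter of which implies $\Phi^\tau_{X_{n-1}\to X_{n-1}X_n} = \Phi^\sigma_{X_{n-1}\to X_{n-1}X_n}$ since the Petz map depends only on the $X_{n-1}X_n$-marginal. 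Combining these yields $\tau = \Phi^\tau_{X_{n-1}\to X_{n-1}X_n}(\tau_{X_1\cdots X_{n-1}})$, which by the iff-characterization of Markov chains stated just after Eq.~\eqref{eq:Petz_recovery} is equivalent to the vanishing of the key conditional mutual information.

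To extend the key identity to a general chain-like partition $A = X_1\cdots X_i$, $B = X_{i+1}\cdots X_{j-1}$, $C = X_j\cdots X_n$, I would split $C = C'\cup\{X_n\}$ with $C' = X_j\cdots X_{n-1}$ (possibly empty) and apply the CMI chain rule
\begin{equation*}
I(A:C|B)_\tau = I(A:C'|B)_\tau + I(A:X_n|BC')_\tau.
\end{equation*}
The first term vanishes by the first case since $ABC' \subsetneq X$ (or is trivial if $C'=\emptyset$), while the second is bounded above by the key identity using the second strong subadditivity inequality in~\eqref{eq:ssa_mono}, applied with $AA' = X_1\cdots X_{n-2}$, $A' = X_{i+1}\cdots X_{n-2}$, $\beta = X_{n-1}$, $CC' = X_n$, so that $A'X_{n-1}$ coincides with $BC'$. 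The only subtle step is matching subsystems correctly in the SSA manipulation; otherwise everything is routine book-keeping, with no further technical obstacle beyond the identities and characterizations already proved in the paper.
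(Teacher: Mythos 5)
Your proof is correct, and its first two steps coincide exactly with the paper's: the reduction to partitions with $ABC = X$ via Proposition~\ref{proposition:indistinguishability}, and the key identity $I(X_1\cdots X_{n-2}:X_n|X_{n-1})_\tau = 0$ obtained by writing $\tau = \Phi^\sigma_{X_{n-1}\to X_{n-1}X_n}(\sigma_{X_1\cdots X_{n-1}})$, replacing $\Phi^\sigma$ by $\Phi^\tau$ (legitimate, since the Petz map depends only on the $X_{n-1}X_n$ marginal and $\tau_{X_{n-1}X_n}=\sigma_{X_{n-1}X_n}$), and invoking the if-and-only-if characterization stated after Eq.~\eqref{eq:Petz_recovery}. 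Where you genuinely diverge is the extension to an arbitrary chain-like partition of $X$: the paper observes that $\tau$ is itself a locally Markov chain and invokes Proposition~\ref{proposition:lqmc_characterization} to conclude that $I(A:C|B)_\tau$ takes the same value---hence zero---on every chain-like partition of $X$, whereas you reduce directly to the key identity via the chain rule $I(A:C'X_n|B)_\tau = I(A:C'|B)_\tau + I(A:X_n|BC')_\tau$ combined with the second inequality of Eq.~\eqref{eq:ssa_mono}. Your bookkeeping there checks out: with $A = X_1\cdots X_i$, $B = X_{i+1}\cdots X_{j-1}$, $C' = X_j\cdots X_{n-1}$, the first term vanishes by the proper-subset case, and monotonicity with $A' = X_{i+1}\cdots X_{n-2}$ gives $I(A:X_n|BC')_\tau \le I(X_1\cdots X_{n-2}:X_n|X_{n-1})_\tau = 0$, with the edge cases $C' = \emptyset$ and $i = n-2$ degenerating harmlessly. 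The trade-off: your finish is more self-contained, needing only SSA and the chain rule rather than the equivalence result of Proposition~\ref{proposition:lqmc_characterization}, while the paper's finish is shorter because that proposition is already proved and gets reused; both reductions ultimately rest on the same kind of elementary manipulations of conditional mutual information.
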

\begin{proof}
By Proposition~\ref{proposition:indistinguishability}, $\tau$ is indistinguishable from $\sigma$ over any proper connected subsystem. Since $\sigma$ is a locally Markov chain, it follows that $I(A:C|B)_{\tau}=0$ for any chain-like subsystems $A, B,$ and $C$ forming a proper subset of $X$. Therefore, to prove the main claim, it suffices to prove $I(A:C|B)_{\tau}=0$ for chain-like subsystems $A, B,$ and $C$ that partition $X$.

Choose $C=X_n$ and $B=X_{n-1}$. Because $\sigma$ is a locally Markov chain, $\tau = \Phi_{X_{n-1}\to X_{n-1}X_n}^{\sigma}(\sigma_{X_1\ldots X_{n-1}})$. Therefore,
\begin{equation}
\begin{aligned}
    \tau_{ABC} &= \Phi_{B\to BC}^{\sigma} (\sigma_{AB}) \\
    &= \Phi_{B\to BC}^{\tau} (\tau_{AB}),
\end{aligned}
\end{equation}
where we used the fact that $\sigma_{AB} = \tau_{AB}$ and $\sigma_{BC} = \tau_{BC}$, consequences of Proposition~\ref{proposition:indistinguishability}. Therefore, $I(A:C|B)_{\tau}=0$ for this particular choice of $ABC$.   

Note that $\tau$ is in fact also a locally Markov chain, because it is indistinguishable from $\sigma$ over any proper connected subsystem and $\sigma$ is a locally Markov chain. By Proposition~\ref{proposition:lqmc_characterization}, $I(A:C|B)_{\tau}$ is a constant for any chain-like $A, B,$ and $C$ that partition $X$. Moreover, it was shown above that this constant is $0$. Therefore, $I(A:C|B)_{\tau}=0$ for any chain-like subsystems $A, B,$ and $C$. This completes the proof. 
\end{proof}

\begin{proposition}
\label{proposition:max_entropy}
Let $A, B,$ and $C$ be chain-like subsystems that partition $\{ X_1,\ldots, X_n\}$. Then $\tau$ is the maximum-entropy state indistinguishable from $\sigma$ over $AB$ and $BC$.
\end{proposition}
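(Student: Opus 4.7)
The plan is to show that maximizing the entropy of a state with fixed marginals $\omega_{AB}=\sigma_{AB}$ and $\omega_{BC}=\sigma_{BC}$ is equivalent to minimizing its conditional mutual information $I(A:C|B)_\omega$, and that $\tau$ is the unique state that achieves this minimum value of zero.

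First I would verify that $\tau$ is in the feasible set, i.e.\ that $\tau_{AB}=\sigma_{AB}$ and $\tau_{BC}=\sigma_{BC}$. Since $A,B,C$ partition $\{X_1,\ldots,X_n\}$ in a chain-like way, each of $AB$ and $BC$ is a union of contiguous $X_i$'s and a \emph{proper} subsystem of $X$, so Proposition~\ref{proposition:indistinguishability} gives exactly this indistinguishability.

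Next, for any state $\omega$ with $\omega_{AB}=\sigma_{AB}$ and $\omega_{BC}=\sigma_{BC}$ (hence also $\omega_B=\sigma_B$), I would invoke the identity
\begin{equation}
S(\omega_{ABC}) = S(\omega_{AB}) + S(\omega_{BC}) - S(\omega_B) - I(A:C|B)_\omega,
\end{equation}
which follows directly from the definition of $I(A:C|B)$. The first three terms on the right are fixed by the marginal constraints, so maximizing $S(\omega_{ABC})$ over the feasible set is equivalent to minimizing $I(A:C|B)_\omega$. By SSA, this minimum is bounded below by $0$, and by Proposition~\ref{proposition:tau_markov} the canonical Markov chain $\tau$ attains $I(A:C|B)_\tau=0$. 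Therefore $\tau$ is a maximizer.

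Finally, for the ``the'' in the statement, I would argue uniqueness: any other maximizer $\omega^\star$ would also have to satisfy $I(A:C|B)_{\omega^\star}=0$, and hence be a quantum Markov chain on $A,B,C$. Since $\tau$ is likewise a Markov chain with the same marginals on $AB$ and $BC$, Lemma~\ref{lemma:markov_chain_local_to_global} forces $\omega^\star=\tau$. I do not foresee a serious obstacle here; the only point that requires a little care is checking that $AB$ and $BC$ are indeed connected proper subsystems so that Proposition~\ref{proposition:indistinguishability} applies, which is immediate from the chain-like partition hypothesis.
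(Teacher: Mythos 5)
Your proposal is correct and follows essentially the same route as the paper: both rest on $\tau$ being feasible (Proposition~\ref{proposition:indistinguishability}), the equivalence of entropy maximization with minimizing $I(A:C|B)$ at fixed marginals, SSA plus $I(A:C|B)_\tau=0$ (Proposition~\ref{proposition:tau_markov}), and uniqueness via the Markov-chain recovery property. The only cosmetic difference is that you cite Lemma~\ref{lemma:markov_chain_local_to_global} for the final uniqueness step, whereas the paper re-derives that step inline with Petz maps.
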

\begin{proof}
Suppose there exists another such state, say $\tau'$. Partition $X$ into $ABC$ which are chain-like. $I(A:C|B)_{\tau}=0$. By the maximum-entropy condition, $S(\tau') \geq S(\tau)$. Because $\tau$ and $\tau'$ are indistinguishable over $AB$, $BC$, and $B$, we conclude $I(A:C|B)_{\tau'}\leq 0$. However, by SSA $I(A:C|B)_{\tau'}\geq 0$. Hence, $I(A:C|B)_{\tau'} = 0$. 

Therefore, we obtain the following expression for $\tau$ and $\tau'$:
\begin{equation}
    \begin{aligned}
        \tau &= \Phi_{B\to BC}^{\tau}(\tau_{AB}) \\
        \tau' &= \Phi_{B\to BC}^{\tau'}(\tau_{AB}').
    \end{aligned}
\end{equation}
Because $\tau_{AB} = \tau_{AB}'$ and $\tau_{BC} = \tau_{BC}'$, we deduce $\tau = \tau'$. This completes the proof. 
\end{proof}

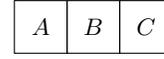
\begin{figure}[t]
    \centering
    \begin{tikzpicture}[scale=0.7]
        \draw[] (0,0) -- ++ (1,0) -- ++ (0, 1) -- ++ (-1, 0) -- cycle;
        \draw[] (1,0) -- ++ (1,0) -- ++ (0, 1) -- ++ (-1, 0) -- cycle;
        \draw[] (2,0) -- ++ (1,0) -- ++ (0, 1) -- ++ (-1, 0) -- cycle;
        \node[] () at (0.5, 0.5) {$A$};
        \node[] () at (1.5, 0.5) {$B$};
        \node[] () at (2.5, 0.5) {$C$};
    \end{tikzpicture}
    \caption{A chain-like region $ABC$ in 2D.}
    \label{fig:chain_like_ex}
\end{figure}

\section{Chain-like subsystems in 2D}
\label{sec:chain_like_2D}

In this Appendix, we prove a certain fact about \emph{chain-like} subsystems in 2D. Roughly speaking, if three subsystems $A$, $B$, and $C$ are arranged in a form shown in Fig.~\ref{fig:chain_like_ex}, we say these subsystems are chain-like. Our main claim is that for such subsystems $I(A:C|B)_{\sigma}=0$. For concreteness, we provide a precise definition and a statement below.
\begin{definition}[Chain-like regions in 2D] Subsystems $A$, $B$, $C$ of a 2D disk (or plane) form a chain-like region if their relation is topologically equivalent to the regions $A$, $B$ and $C$ shown in Fig.~\ref{fig:chain_like_ex}.  
\end{definition}

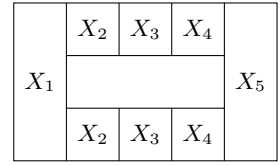
\begin{figure}[t]
 \begin{tikzpicture}
  \centering
  \begin{scope} [scale=0.7]
    \begin{scope}
      \draw (0,0) rectangle (5,3);
      \draw (1,1) rectangle (4,2);
      \foreach \x in {1,2,3,4}{
        \draw (\x,0) -- (\x,1) (\x,2) -- (\x,3);
      }
  
      \draw (0.5,1.5)node{$X_1$};
      \draw (1.5,0.5)node{$X_2$};
      \draw (1.5,2.5)node{$X_2$};
      \draw (2.5,0.5)node{$X_3$};
      \draw (2.5,2.5)node{$X_3$};
      \draw (3.5,0.5)node{$X_4$};
      \draw (3.5,2.5)node{$X_4$};
      \draw (4.5,1.5)node{$X_5$};
    \end{scope}
  \end{scope}
\end{tikzpicture}
 	\caption{We can view an annulus in 2D as a 1D ``spin chain.'' The shown partition defines a spin chain over the ordered subsystems $\{ X_1, X_2, X_3, X_4, X_5 \}$. }
 	\label{fig:annulus_to_chain}
 \end{figure}

\begin{proposition}[Chain-like regions in 2D are Markov] \label{cor:reference_is_Markov}
Let $\sigma$ be a reference state as in Definition~\ref{assumption:TEE-inv} of the main text. For any chain-like region ABC in 2D (as in Fig.~\ref{fig:chain_like_ex}), 
we have $I(A:C|B)_{\sigma}=0$.
\end{proposition}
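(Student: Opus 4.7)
The plan is to reduce the 2D chain-like claim to the 1D locally Markov result of Proposition~\ref{proposition:lqmc_characterization}, by embedding $ABC$ into an annulus equipped with the five-part partition of Fig.~\ref{fig:annulus_to_chain}. Concretely, given a chain-like region $A, B, C$ in 2D, I would choose an annulus $\Omega$ containing it, together with a five-part partition $\{X_1, X_2, X_3, X_4, X_5\}$ of $\Omega$ as in Fig.~\ref{fig:annulus_to_chain}, arranged so that $A, B, C$ coincide with the ``top halves'' $X_2^{\mathrm{top}}, X_3^{\mathrm{top}}, X_4^{\mathrm{top}}$ of the three middle subsystems. The hole of $\Omega$ sits below $B$, separating these top halves from the corresponding bottom halves $X_2^{\mathrm{bot}}, X_3^{\mathrm{bot}}, X_4^{\mathrm{bot}}$.

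I would then establish that $\sigma|_\Omega$ is a locally Markov chain on this 5-chain. Every chain-like tripartition of $\{X_1, \ldots, X_5\}$ produces an annulus partition of $\Omega$ topologically equivalent to Fig.~\ref{fig:special_circuit}(a), so Definition~\ref{assumption:TEE-inv}(i) gives $I = 2\gamma_0$ uniformly across all such partitions, and Proposition~\ref{proposition:lqmc_characterization} then yields the locally Markov property. Applied to the chain-like proper subset $(X_2, X_3, X_4)$, this gives $I(X_2{:}X_4|X_3)_\sigma = 0$.

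Next I would invoke Definition~\ref{assumption:TEE-inv}(ii). The top strip $(X_2 X_3 X_4)^{\mathrm{top}}$ and the bottom strip $(X_2 X_3 X_4)^{\mathrm{bot}}$ are separated by the hole of $\Omega$ and by the untouched subsystems $X_1, X_5$, hence are non-adjacent. Property (ii) then forces $I((X_2X_3X_4)^{\mathrm{top}}{:}(X_2X_3X_4)^{\mathrm{bot}})_\sigma = 0$, so the reduced state factorizes, $\sigma_{X_2 X_3 X_4} = \sigma^{\mathrm{top}}\otimes\sigma^{\mathrm{bot}}$. This factorization makes every reduced-density entropy entering the CMI additive over top and bottom, yielding
\[
I(X_2{:}X_4|X_3)_\sigma = I(X_2^{\mathrm{top}}{:}X_4^{\mathrm{top}}|X_3^{\mathrm{top}})_\sigma + I(X_2^{\mathrm{bot}}{:}X_4^{\mathrm{bot}}|X_3^{\mathrm{bot}})_\sigma.
\]
Each summand is nonnegative by SSA and the total vanishes, so both vanish individually; in particular $I(A{:}C|B)_\sigma = I(X_2^{\mathrm{top}}{:}X_4^{\mathrm{top}}|X_3^{\mathrm{top}})_\sigma = 0$.

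The step I expect to need the most care is the first: the chain-like tripartitions of the 5-chain in Fig.~\ref{fig:annulus_to_chain} inevitably realize the ``middle'' piece as two disconnected top and bottom rectangles, so one must argue that each such partition is still topologically equivalent to Fig.~\ref{fig:special_circuit}(a) for the purposes of Definition~\ref{assumption:TEE-inv}(i). One must also take the hole of $\Omega$ thick enough (a few lattice spacings) that the top and bottom strips genuinely qualify as non-adjacent subsystems under Definition~\ref{assumption:TEE-inv}(ii).
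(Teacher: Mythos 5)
Your proposal is correct and follows essentially the same route as the paper's own proof: embedding $ABC$ as the upper components of $X_2,X_3,X_4$ in the annulus partition of Fig.~\ref{fig:annulus_to_chain}, invoking Definition~\ref{assumption:TEE-inv}(i) together with Proposition~\ref{proposition:lqmc_characterization} to get $I(X_2{:}X_4|X_3)_\sigma=0$, and then using property (ii) to factorize the state over the upper and lower components so that the vanishing CMI splits into two nonnegative pieces, each of which must vanish. Your closing caveats (that the middle piece of any chain-like tripartition has two components, matching Fig.~\ref{fig:special_circuit}(a), and that the hole must be thick enough for non-adjacency) are exactly the implicit geometric assumptions the paper also relies on.
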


\begin{proof}
To prove this claim, we shall identify the chain-like region $ABC$ as a subset of an annulus. Specifically, given any such $ABC$, consider an auxiliary annulus as shown in Fig.~\ref{fig:annulus_to_chain} such that $A, B,$ and $C$ are precisely the upper connected components of $X_2,$ $X_3,$ and $X_4,$ respectively.

Note that, by the definition of the reference state (Definition~\ref{assumption:TEE-inv}~of the main text), the state $\sigma$ over $\{X_1, X_2, X_3, X_4, X_5 \}$ has a constant value of $I(P:Q|R)_{\sigma}$ for any contiguous $P, R$, and $Q$
that partition $X=\cup_{i=1}^5 X_i$, i.e.,\ $I(X_1 X_2 : X_5 | X_3 X_4)=I(X_1  : X_5 | X_2 X_3 X_4)=\ldots$ etc. By Proposition~\ref{proposition:lqmc_characterization}, we thus conclude that $\sigma$ is a locally Markov chain over $\{X_1, X_2, X_3, X_4, X_5 \}$. In particular,
\begin{equation}
    I(X_2:X_4|X_3)_{\sigma}=0.
\end{equation}

Because the reference state $\sigma$ has vanishing mutual information between any two subsystems that are not adjacent to each other, the state over the upper and the lower components (of any region in $\{ X_2X_3, X_3 X_4, X_3, X_2 X_3 X_4 \}$) factorize. Therefore, 
\begin{equation}
    I(A:C|B)_{\sigma}=0,
\end{equation}
where again we identify $A,B,C$ with the upper connected components of $X_2$, $X_3$, and $X_4$, respectively.
\end{proof}

\section{Max-entropy state on a 2D annulus}
\label{appendix:max-ent_2d_new}

The argument presented in the main text was based on three facts. Specifically, we considered a partition of an annulus $ABC$ into $P$ and $Q=ABC{\setminus}P$ such that $P\subset A$ is sufficiently far from the support of the unitary $U$ (Fig.~\ref{fig:partition_annulus_comparison}).\footnote{More specifically, we assumed that $P$ must be at least two lattice spacing away from the support of $\overline{U}$ (Fig.~\ref{fig:partition_regroup}). Because the support of $\overline{U}$ is included in the support of $U$, the requisite condition can be fulfilled by demanding $P$ is at least two lattice spacing away from the support of $U$.}
The density matrix $\lambda$ was then defined as the maximum-entropy state consistent with $\sigma$ over the past light cones of $AB$ and $BC$ (Fig.~\ref{fig:past_light_cones_abc}). Denoting the support of $U$ as $u$, we claimed that 
\begin{align}
    S(\lambda_{PQ{\cup}u}) - S(&\sigma_{PQ{\cup}u}) = 2\gamma_0, \label{eq:fact_first}\\
    \overline{\lambda}_P = \overline{\sigma}_P, \,&\,\overline{\lambda}_Q = \overline{\sigma}_Q \label{eq:fact_second}.
\end{align}
Moreover, we also claimed that
\begin{equation}
\begin{aligned}
    \lambda_{PQ{\cup}\overline{u}} &=\Phi_{v\to v\overline{u}}^{\sigma}(\lambda_{PQ{\setminus}\overline{u}}), \\
    \sigma_{PQ{\cup}\overline{u}} &=\Phi_{v\to v\overline{u}}^{\sigma}(\sigma_{PQ{\setminus}\overline{u}}), 
\end{aligned}\label{eq:fact_third}
\end{equation}
where $\overline{U}$ is the unitary obtained by removing the gates in $U$ that are deep inside the annulus, $\overline{u}$ is its support, and $v$ is the region in the annulus that surrounds $\overline{u}$ (Fig.~\ref{fig:iso_forward}~of the main text). Below, we prove these facts.

The first two facts Eq.~\eqref{eq:fact_first} and~\eqref{eq:fact_second} follow straightforwardly from the observation that the annulus $PQ \cup u$ can be partitioned into a set of subsystems in such a way that the reference state $\sigma$ is a locally Markov chain over these subsystems (Appendix~\ref{sec:local_vs_global_qmc}). This implies that $\lambda$ is the canonical Markov chain (Definition~\ref{definition:canonical_markov_chain}) associated with $\sigma$ over this partition. Then from the properties of canonical Markov chains, Eq.~\eqref{eq:fact_first} and~\eqref{eq:fact_second} both follow.

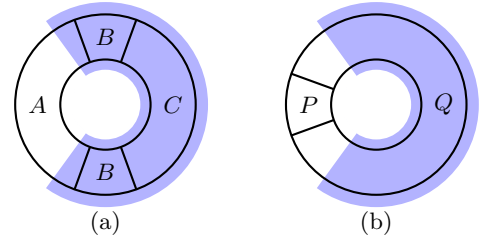
\begin{figure}[t]
    \centering
    \begin{tikzpicture}[every path/.style={thick}, scale=0.6]
        \filldraw[blue!30!white] (0,0) circle (2.25cm);
	\filldraw[white] (0,0) circle (0.75cm);
		\filldraw[white] (126:0.75cm) -- (126:2.5cm) -- (126:2.5cm) arc (126:234:2.5cm) -- (234:0.75cm) -- (234:0.75cm) arc (234:150:0.75cm) --cycle; 
    \draw[] (0,0) circle (1cm);
    \draw[] (0,0) circle (2cm);

    \draw[] (70:1cm) -- (70:2cm);
    \draw[] (-70:1cm) -- (-70:2cm);
    \draw[] (110:1cm) -- (110:2cm);
    \draw[] (-110:1cm) -- (-110:2cm);
    
    \node[] () at (0, -2.625cm) {(a)};
    \node[] () at (180:1.5cm) {$A$};
    \node[] () at (0:1.5cm) {$C$};
    \node[] () at (90:1.5cm) {$B$};
    \node[] () at (-90:1.5cm) {$B$};
    \begin{scope}[xshift=6cm]
        \filldraw[blue!30!white] (0,0) circle (2.25cm);
	\filldraw[white] (0,0) circle (0.75cm);
		\filldraw[white] (126:0.75cm) -- (126:2.5cm) -- (126:2.5cm) arc (126:234:2.5cm) -- (234:0.75cm) -- (234:0.75cm) arc (234:150:0.75cm) --cycle; 
    \draw[] (0,0) circle (1cm);
    \draw[] (0,0) circle (2cm);

    \draw[] (160:1cm) -- (160:2cm);
    \draw[] (-160:1cm) -- (-160:2cm);
    
    \node[] () at (0, -2.625cm) {(b)};
    \node[] () at (180:1.5cm) {$P$};
    \node[] () at (0:1.5cm) {$Q$};
    \end{scope}
    \end{tikzpicture}
    \caption{Two different partitions of the annulus, one into $ABC$ (a) and one into $PQ$ (b). $P\subset A$ is chosen so that it is sufficiently far from the support of the unitary $U$ (blue).}
    \label{fig:partition_annulus_comparison}
\end{figure}

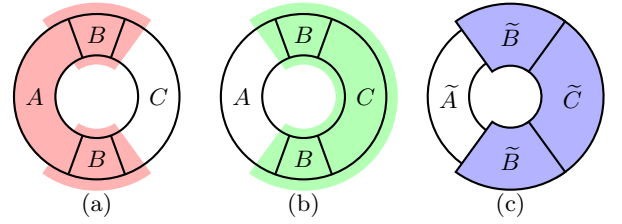
\begin{figure}[t]
    \centering
    \begin{tikzpicture}[every path/.style={thick}, scale=0.55]
        \filldraw[red!30!white] (0,0) circle (2.25cm);
	\filldraw[white] (0,0) circle (0.75cm);
		\filldraw[white] (126:0.75cm) -- (126:2.5cm) -- (126:2.5cm) arc (126:234:2.5cm) -- (234:0.75cm) -- (234:0.75cm) arc (234:150:0.75cm) --cycle; 
  \filldraw[white] (54:0.75cm) -- (54:2.5cm) -- (54:2.5cm) arc (54:-54:2.5cm) -- (-54:0.75cm) -- (-54:0.75cm) arc (-54:150:0.75cm) --cycle; 

  \filldraw[red!30!white] (120:1cm) -- (120:2cm) -- (120:2cm) arc (120:240:2cm) -- (240:1cm) -- (240:1cm) arc (240:120:1cm) -- cycle;
    \draw[] (0,0) circle (1cm);
    \draw[] (0,0) circle (2cm);

    \draw[] (70:1cm) -- (70:2cm);
    \draw[] (-70:1cm) -- (-70:2cm);
    \draw[] (110:1cm) -- (110:2cm);
    \draw[] (-110:1cm) -- (-110:2cm);
    
    \node[] () at (0, -2.625cm) {(a)};
    \node[] () at (180:1.5cm) {$A$};
    \node[] () at (0:1.5cm) {$C$};
    \node[] () at (90:1.5cm) {$B$};
    \node[] () at (-90:1.5cm) {$B$};
    \begin{scope}[xshift=5cm]
        \filldraw[green!30!white] (0,0) circle (2.25cm);
	\filldraw[white] (0,0) circle (0.75cm);
		\filldraw[white] (126:0.75cm) -- (126:2.5cm) -- (126:2.5cm) arc (126:234:2.5cm) -- (234:0.75cm) -- (234:0.75cm) arc (234:150:0.75cm) --cycle; 
    \draw[] (0,0) circle (1cm);
    \draw[] (0,0) circle (2cm);

    \draw[] (70:1cm) -- (70:2cm);
    \draw[] (-70:1cm) -- (-70:2cm);
    \draw[] (110:1cm) -- (110:2cm);
    \draw[] (-110:1cm) -- (-110:2cm);
    
    \node[] () at (0, -2.625cm) {(b)};
    \node[] () at (180:1.5cm) {$A$};
    \node[] () at (0:1.5cm) {$C$};
    \node[] () at (90:1.5cm) {$B$};
    \node[] () at (-90:1.5cm) {$B$};
    \end{scope}
    \begin{scope}[xshift=10cm]
    \draw[] (120:1cm) -- (120:2cm) -- (120:2cm) arc (120:240:2cm) -- (240:1cm) -- (240:1cm) arc (240:120:1cm) -- cycle;
    \draw[fill=blue!30!white] (60:0.75cm) -- (60:2.25cm) -- (60:2.25cm) arc (60:-60:2.25cm) -- (-60:0.75cm) -- (-60:0.75cm) arc (-60:60:0.75cm) -- cycle;
    \draw[fill=blue!30!white] (54:0.75cm) -- (54:2.25cm) -- (54:2.25cm) arc (54:126:2.25cm) -- (126:0.75cm) -- (126:0.75cm) arc (126:54:0.75cm) -- cycle;
    \draw[fill=blue!30!white] (-54:0.75cm) -- (-54:2.25cm) -- (-54:2.25cm) arc (-54:-126:2.25cm) -- (-126:0.75cm) -- (-126:0.75cm) arc (-126:-54:0.75cm) -- cycle;

    \node[] () at (0, -2.625cm) {(c)};
    \node[] () at (180:1.5cm) {$\widetilde{A}$};
    \node[] () at (0:1.5cm) {$\widetilde{C}$};
    \node[] () at (90:1.5cm) {$\widetilde{B}$};
    \node[] () at (-90:1.5cm) {$\widetilde{B}$};
    \end{scope}
    \end{tikzpicture}
    \caption{(a) Past light cone of $AB$ (red). (b) Past light cone of $BC$ (green). (c) The enlarged annulus $PQ{\cup}u$ can be partitioned into the new subsystems $\widetilde{A}, \widetilde{B},$ and $\widetilde{C}$. The past light cone of $AB$ and $BC$ is $\widetilde{A}\widetilde{B}$ and $\widetilde{B}\widetilde{C}$, respectively. The blue region is the support of $U$.}
    \label{fig:past_light_cones_abc}
\end{figure}

More specifically, consider the partition shown in Fig.~\ref{fig:past_light_cones_abc}(c). Here $\widetilde{B}$ is the past light cone of $B$. The remaining subsystems $\widetilde{A}$ and $\widetilde{C}$ are obtained by subtracting $\widetilde{B}$ from the past light cone of $AB$ and $BC$, respectively. Together, $\widetilde{A}$, $\widetilde{B}$, and $\widetilde{C}$ partition the enlarged annulus $ABC{\cup}u = PQ{\cup}u$. 

By our assumption, $P$ is separated from the support of $U$. Since the support of $U$ includes  $\widetilde{B}\widetilde{C}$, $\widetilde{B}\widetilde{C}$ is separated from $P$. Therefore, 
the enlarged annulus can be further partitioned into $P, \widetilde{A}{\setminus}P, \widetilde{B}, \widetilde{C}$ (Fig.~\ref{fig:partition_enlarged}). With respect to this partition, we now invoke facts about locally Markov chains (Appendix~\ref{sec:local_vs_global_qmc}). To draw a parallel with the discussion in Appendix~\ref{sec:local_vs_global_qmc}, let us momentarily denote $P, \widetilde{A}{\setminus}P, \widetilde{B}, \widetilde{C}$ as $X_1, X_2, X_3, X_4$, respectively. Because we assumed that the reference state $\sigma$ has a constant TEE, it immediately follows that $\sigma_{X_1X_2X_3X_4}$ is a locally Markov chain with respect to the partition $\{X_1, X_2, X_3, X_4 \}$.

Since $\sigma$ is a locally Markov chain, there is also a canonical Markov chain associated to $\sigma$. Crucially, this canonical Markov chain is precisely $\lambda$. Recall that the state $\lambda$ was defined as the maximum-entropy state consistent with the past light cone of $AB$ and $BC$. Thus, by Proposition~\ref{proposition:max_entropy}, we immediately see that $\lambda$ is the canonical Markov chain associated with $\sigma$, with respect to the partition $\{X_1, X_2, X_3, X_4 \}$. 

Now that we established that $\lambda$ is the canonical Markov chain, we can use our machinery to prove Eqs.~\eqref{eq:fact_first} and~\eqref{eq:fact_second}. In particular, because the canonical Markov chain is a Markov chain (Proposition~\ref{proposition:tau_markov}), we know $I(X_1 X_2 : X_4 | X_3)_\lambda = 0$. Then \eqref{eq:fact_first} follows from the fact that $I(X_1 X_2 : X_4 | X_3)_\sigma = 2 \gamma_0$, and that $\lambda_{X_1 X_2 X_3} = \sigma_{X_1 X_2 X_3}$ and $\lambda_{X_3 X_4} = \sigma_{X_3 X_4}$ (Proposition~\ref{proposition:indistinguishability}). Similarly, to establish \eqref{eq:fact_second}, we use Proposition~\ref{proposition:indistinguishability} to deduce $\lambda_{X_1} = \sigma_{X_1}$ and $\lambda_{X_2X_3X_4} = \sigma_{X_2X_3X_4}$. Then since $X_1$ is the past light cone of $P$ and $X_2 X_3 X_4$ includes the past light cone of $Q$, we derive \eqref{eq:fact_second}.

\begin{figure}[t]
    \centering
    \begin{tikzpicture}[every path/.style={thick}, scale=0.6]
    \draw[] (120:1cm) -- (120:2cm) -- (120:2cm) arc (120:240:2cm) -- (240:1cm) -- (240:1cm) arc (240:120:1cm) -- cycle;
    \draw[fill=blue!30!white] (60:0.75cm) -- (60:2.25cm) -- (60:2.25cm) arc (60:-60:2.25cm) -- (-60:0.75cm) -- (-60:0.75cm) arc (-60:60:0.75cm) -- cycle;
    \draw[fill=blue!30!white] (54:0.75cm) -- (54:2.25cm) -- (54:2.25cm) arc (54:126:2.25cm) -- (126:0.75cm) -- (126:0.75cm) arc (126:54:0.75cm) -- cycle;
    \draw[fill=blue!30!white] (-54:0.75cm) -- (-54:2.25cm) -- (-54:2.25cm) arc (-54:-126:2.25cm) -- (-126:0.75cm) -- (-126:0.75cm) arc (-126:-54:0.75cm) -- cycle;
  
    \draw[] (160:1cm) -- (160:2cm);
    \draw[] (200:1cm) -- (200:2cm);
    
    \node[] () at (180:1.5cm) {$P$};
    \node[] () at (0:1.5cm) {$\widetilde{C}$};
    \node[] () at (90:1.5cm) {$\widetilde{B}$};
    \node[] () at (-90:1.5cm) {$\widetilde{B}$};
    \node[] (sub) at (-4.25cm,0cm) {$\widetilde{A}{\setminus}P$};

    \draw[->] (sub) -- (-4.25, 1.25) -- (-1.75 , 1.25);
    \draw[->] (sub) -- (-4.25, -1.25) -- (-1.75 , -1.25);
    \end{tikzpicture}
    \caption{Partition of an enlarged annulus $PQ{\cup}u$ into $P, \widetilde{A}{\setminus}P, \widetilde{B},$ and $\widetilde{C}$.}
    \label{fig:partition_enlarged}
\end{figure}
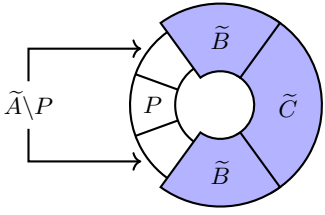

The proof of~\eqref{eq:fact_third} involves two different partitions of the enlarged annulus $PQ{\cup}u$. The first partition is described in Fig.~\ref{fig:partition_fine_grained}, which is obtained by dividing $\widetilde{A}{\setminus}P$ further into $Q_1$ and $\widetilde{A}{\setminus}(PQ_1)$. The second partition is based on the subsystems defined in terms of $\overline{U}$, which is the unitary obtained from $U$ by removing the gates deep in the annulus. Provided that the annulus is sufficiently thick, the support of $\overline{U}$ can be split into two non-overlapping disks, each of which can be furthermore surrounded by non-overlapping disks lying inside the annulus [Fig.~\ref{fig:partition_regroup}(a-b)]. 

\begin{figure}[t]
    \centering
    \begin{tikzpicture}[every path/.style={thick}, scale=0.6]
    \draw[] (120:1cm) -- (120:2cm) -- (120:2cm) arc (120:240:2cm) -- (240:1cm) -- (240:1cm) arc (240:120:1cm) -- cycle;
    \draw[fill=blue!30!white] (60:0.75cm) -- (60:2.25cm) -- (60:2.25cm) arc (60:-60:2.25cm) -- (-60:0.75cm) -- (-60:0.75cm) arc (-60:60:0.75cm) -- cycle;
    \draw[fill=blue!30!white] (54:0.75cm) -- (54:2.25cm) -- (54:2.25cm) arc (54:126:2.25cm) -- (126:0.75cm) -- (126:0.75cm) arc (126:54:0.75cm) -- cycle;
    \draw[fill=blue!30!white] (-54:0.75cm) -- (-54:2.25cm) -- (-54:2.25cm) arc (-54:-126:2.25cm) -- (-126:0.75cm) -- (-126:0.75cm) arc (-126:-54:0.75cm) -- cycle;
  
    \draw[] (160:1cm) -- (160:2cm);
    \draw[] (200:1cm) -- (200:2cm);

    \draw[] (142.5:1cm) -- (142.5:2cm);
    \draw[] (217.5:1cm) -- (217.5:2cm);
    
    \node[] () at (180:1.5cm) {$P$};
    \node[] () at (0:1.5cm) {$\widetilde{C}$};
    \node[] () at (90:1.5cm) {$\widetilde{B}$};
    \node[] () at (-90:1.5cm) {$\widetilde{B}$};
    \node[] (sub) at (-5cm,0) {$\widetilde{A}{\setminus}(PQ_1)$};
    \node[] (q1) at (-3cm, 0) {$Q_1$};

    \draw[->] (q1) -- (-3, 0.875) -- (-1.875, 0.875);
    \draw[->] (q1) -- (-3, -0.875) -- (-1.875, -0.875);

    \draw[->] (sub) -- (-5, 1.5) -- (-1.5, 1.5);
    \draw[->] (sub) -- (-5, -1.5) -- (-1.5, -1.5);
    \end{tikzpicture}
    \caption{Partition of an enlarged annulus $PQ{\cup}u$ into $P, Q_1, \widetilde{A}{\setminus}(PQ_1), \widetilde{B},$ and $\widetilde{C}$.}
    \label{fig:partition_fine_grained}
\end{figure}
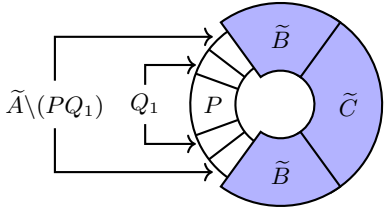

\begin{figure}[t]
    \centering
    \begin{tikzpicture}[every path/.style={thick}, scale=0.75]

    \filldraw[orange, dashed, draw=black] (130:1.625cm) -- (130:2cm) -- (130:2cm) arc (130:-130:2cm) -- (-130:1.625cm) -- (-130:1.625cm) arc (-130:130:1.625cm) -- cycle;
    \filldraw[blue!30!white] (126:1.75cm) -- (126:2.25cm) -- (126:2.25cm) arc (126:-126:2.25cm) -- (-126:1.75cm) -- (-126:1.75cm) arc (-126:126:1.75cm) -- cycle;

    \filldraw[orange, dashed, draw=black] (130:1cm) -- (130:1.375cm) -- (130:1.375cm) arc (130:-130:1.375cm) -- (-130:1cm) -- (-130:1cm) arc (-130:130:1cm) -- cycle;
    \filldraw[blue!30!white] (126:1cm) -- (126:1.25cm) -- (126:1.25cm) arc (126:-126:1.25cm) -- (-126:1cm) -- (-126:1cm) arc (-126:126:1cm) -- cycle;
    

    \draw [] (0,0) circle [radius=1];
    \draw [] (0,0) circle [radius=2];
    \draw[] (160:1cm) -- (160:2cm);
    \draw[] (200:1cm) -- (200:2cm);

    \draw[] (142.5:1cm) -- (142.5:2cm);
    \draw[] (217.5:1cm) -- (217.5:2cm);
    
    \node[] () at (180:1.5cm) {$P$};
    \node[] (q1) at (-3cm, 0) {$Q_1$};

    \draw[->] (q1) -- (-3, 0.875) -- (-1.875, 0.875);
    \draw[->] (q1) -- (-3, -0.875) -- (-1.875, -0.875);

    \begin{scope}[xshift=5.5cm]
    \draw[dashed, thick, fill=orange] (60:1) circle (0.5);
 \begin{scope}
 \tikzset{every path/.style={}}
    \clip (0,0) circle (2);
   \draw[dashed, thick, fill=orange] (300:2) circle (0.5);
 \end{scope}
 \draw[thick, fill=white] (0,0) circle (1);
 \draw[thick] (0,0) circle (1);
 \draw[thick] (0,0) circle (2);
 \draw[thick, fill=blue!30] (300:2) circle (0.25);
 \draw[thick, fill=blue!30] (60:1) circle (0.25);
 \draw[thick] (-155:1) -- (-155:2);
 \draw[thick] (155:1) -- (155:2);
 \node[] (P) at (180:1.5) {$P$};
 \node[] () at (135:1.5) {$Q_1$};
 \node[] () at (-135:1.5) {$Q_1$};

 \draw[] (120:1) -- (120:2);
 \draw[] (-120:1) -- (-120:2);
 \node[] (Q) at (0:1.5) {$Q_2{\setminus}\overline{u}$};
 \node[] (B1) at (-0.35, 0.2) {$v_1$};
 \node[] (C1) at (0.35, 0.2) {$\overline{u}_1$};
 \node[] (B2) at (-0.35, -0.4) {$v_2$};
 \node[] (C2) at (0.35, -0.4) {$\overline{u}_2$};
 \draw[->] (-0.35, 0.5) -- (0.2,1.1);
 \draw[->] (-0.35, -0.6) -- (290:1.9);
 \draw[->] (0.35, 0.4) -- (60: 1);
 \draw[->] (0.35, -0.6) -- (300:2);
    \end{scope}
    \end{tikzpicture}
    \caption{(a) If the annulus is sufficiently thick, upon removing the gates deep inside the annulus, we obtain a unitary $\overline{U}$ acting on two non-overlapping disks (blue). Furthermore, one can choose two disconnected disks in the annulus that surround each disk (orange). Figure (b) can be obtained from (a) by an appropriate regrouping of sites and topologically equivalent re-drawing. Here $Q=Q_1Q_2$, $\overline{u} = \overline{u}_1 \overline{u}_2$, and $v=v_1v_2$. To be able to partition the annulus this way, the support of $\overline{U}$ must be at least two lattice spacing away from $P$.}
    \label{fig:partition_regroup}
\end{figure}
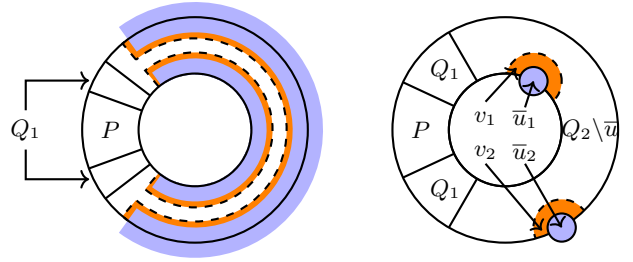

We use the first partition to establish the following fact:
\begin{equation}
\lambda_{PQ{\cup}\overline{u}} = \Phi_{Q_1\to PQ_1}^{\sigma}(\sigma_{Q{\cup}\overline{u}}). \label{eq:lambda_recovery}
\end{equation}
We can view $\lambda$ as the canonical Markov chain associated with the reference state $\sigma$ with respect to the partition $\{X_1, X_2, X_3, X_4, X_5\} = \{P, Q_1, \widetilde{A}{\setminus}(PQ_1), \widetilde{B}, \widetilde{C} \}$. This implies that $\lambda$ is a Markov chain satisfying $I(X_1:X_3X_4X_5|X_2)_{\lambda}=0$ (Proposition~\ref{proposition:tau_markov}). Furthermore, $\lambda$ and $\sigma$ are indistinguishable over $X_1X_2 = PQ_1$ and $X_2X_3X_4X_5=\widetilde{A}\widetilde{B}\widetilde{C}{\setminus}P = Q{\cup}\overline{u}$ (Proposition~\ref{proposition:indistinguishability}). Eq.~\eqref{eq:lambda_recovery} follows directly from these facts.

We use the second partition to prove the following fact:
\begin{equation}
    I(\overline{u}:Q{\setminus} (v\overline{u})|v)_{\sigma}=0.\label{eq:cmi_fact_key}
\end{equation}
To see why this is true, note that 
\begin{equation}
    \begin{aligned}
    I(\overline{u}_1:Q{\setminus}(v_1\overline{u})|v_1)_{\sigma}&=0, \\
        I(\overline{u}_2:(Q{\cup}\overline{u}_1){\setminus}(v_2\overline{u}_2) |v_2)_{\sigma}&=0,
    \end{aligned}
\end{equation}
because $\overline{u}_1, v_1,$ and $Q{\setminus}(v_1\overline{u})$ is chain-like and so is $\overline{u}_2, v_2,$ and $(Q{\cup}\overline{u}_1){\setminus}(v_2\overline{u}_2)$ (Proposition~\ref{cor:reference_is_Markov}). More explicitly, these equations can be written as
\begin{equation}
\begin{aligned}
&I(\overline{u}_1:Q{\setminus}(v_1\overline{u})|v_1)_{\sigma} = \\
&S(\sigma_{\overline{u}_1v_1}) + S(\sigma_{Q{\setminus}\overline{u}}) - S(\sigma_{v_1}) - S(\sigma_{(Q{\cup}\overline{u}_1){\setminus}\overline{u}_2}) \\
&=0
\end{aligned}
\end{equation}
and 
\begin{equation}
\begin{aligned}
&I(\overline{u}_2:(Q{\cup}\overline{u}_1){\setminus}(v_2\overline{u}_2) |v_2)_{\sigma} = \\
& S(\sigma_{\overline{u}_2v_2}) + S(\sigma_{(Q{\cup}\overline{u}_1){\setminus}\overline{u}_2}) - S(\sigma_{v_2}) - S(\sigma_{(Q{\cup}\overline{u})})\\
&=0.
\end{aligned}
\end{equation}
Summing the two, we obtain
\begin{equation}
I(\overline{u}:Q{\setminus}(v\overline{u})|v)_{\sigma} + I(v_1\overline{u}_1:v_2\overline{u}_2) - I(v_1:v_2)=0.
\end{equation}
Because the mutual information between two non-adjacent subsystems is zero, we conclude $I(\overline{u}:Q{\setminus}(v\overline{u})|v)_{\sigma}=0$. 

Combining these two facts, we now prove~\eqref{eq:fact_third}, starting with the first identity. First note that, upon taking a partial trace on $\overline{u}$, we get
\begin{equation}
\begin{aligned}
\textrm{Tr}_{\overline{u}}(\lambda_{PQ{\cup}\overline{u}}) &= \textrm{Tr}_{\overline{u}}(\Phi_{Q_1\to PQ_1}^{\sigma}(\sigma_{Q{\cup}\overline{u}})) \\
&= \Phi_{Q_1\to PQ_1}^{\sigma}(\textrm{Tr}_{\overline{u}}(\sigma_{Q{\cup}\overline{u}})) \\
&= \Phi_{Q_1\to PQ_1}^{\sigma}(\sigma_{Q{\setminus}\overline{u}}).
\end{aligned}
\end{equation} 
On the second line, the map $\text{Tr}_{\overline{u}}$  and $\Phi_{Q_1\to PQ_1}^{\sigma}$ can be exchanged because they act on disjoint subsystems. Therefore, we obtain
\begin{equation}
\lambda_{PQ{\setminus}\overline{u}} = \Phi_{Q_1\to PQ_1}^{\sigma}(\sigma_{Q{\setminus}\overline{u}}).
\end{equation}
Next note that, by construction, the Petz maps $\Phi_{v\to v\overline{u}}^{\sigma}$ and $\Phi_{Q_1\to PQ_1}^{\sigma}$ act on disjoint subsystems. As such, their order can be exchanged. We therefore obtain
\begin{equation}
\begin{aligned}
\Phi_{v\to v\overline{u}}^{\sigma} (\lambda_{PQ{\setminus}{\overline{u}}}) &= \Phi_{v\to v\overline{u}}^{\sigma}(\Phi_{Q_1\to PQ_1}^{\sigma}(\sigma_{Q\setminus{\overline{u}}})) \\
&=\Phi_{Q_1\to PQ_1}^{\sigma}(\Phi_{v\to v\overline{u}}^{\sigma}(\sigma_{Q\setminus{\overline{u}}})) \\
&= \Phi_{Q_1\to PQ_1}^{\sigma}(\sigma_{Q{\cup}\overline{u}}) \\
&= \lambda_{PQ{\cup}\overline{u}}.
\end{aligned}
\end{equation}
In the second line we exchanged the order of the channels. In the third and the fourth line, we used ~\eqref{eq:cmi_fact_key} and~\eqref{eq:lambda_recovery}, respectively. This establishes the first identity in~\eqref{eq:fact_third}.

Now we prove the second identity in~\eqref{eq:fact_third}. The key point is that
\begin{equation}
I(\overline{u}:PQ{\setminus}(v\overline{u})|v)_{\sigma}=0,\label{eq:cmi_reference}
\end{equation}
which follows from ~\eqref{eq:cmi_fact_key} and the invariance of the TEE under topology-preserving deformations. Specifically, note the following identity
\begin{equation}
\begin{aligned} 
&I(\overline{u}:PQ{\setminus}(v\overline{u})|v)_{\sigma} - I(\overline{u}:Q{\setminus}(v\overline{u})|v)_{\sigma} \\
&= I(P:Q_2|Q_1)_{\sigma} - I(P:Q_2\setminus \overline{u}|Q_1)_{\sigma} \\
&=0.
\end{aligned}
\label{eq:explanation_for_e3}
\end{equation}
The equivalence of the first and the second line follows from the definition of the conditional mutual information. More specifically, note the following more explicit expressions for the two conditional mutual information quantities in the first line:
\begin{equation}
\begin{aligned}
&I(\overline{u}:PQ{\setminus}(v\overline{u})|v)_{\sigma} \\
&= S(\sigma_{\overline{u}v}) + S(\sigma_{PQ{\setminus}\overline{u}}) - S(\sigma_v) - S(\sigma_{PQ})
\end{aligned}
\end{equation}
and 
\begin{equation}
\begin{aligned}
&I(\overline{u}:Q{\setminus}(v\overline{u})|v)_{\sigma} \\
&= S(\sigma_{\overline{u}v}) + S(\sigma_{Q{\setminus}\overline{u}}) - S(\sigma_v) - S(\sigma_{Q}).
\end{aligned}
\end{equation}
Taking the difference between the two, we obtain
\begin{equation}
\begin{aligned}
&I(\overline{u}:PQ{\setminus}(v\overline{u})|v)_{\sigma} - I(\overline{u}:Q{\setminus}(v\overline{u})|v)_{\sigma} \\
&= \left(S(\sigma_Q) - S(\sigma_{PQ})\right) -  \left(S(\sigma_{Q{\setminus}\overline{u}}) - S(\sigma_{PQ{\setminus}\overline{u}}) \right).
\end{aligned}
\end{equation}
By adding $S(\sigma_{PQ_1}) - S(\sigma_{Q_1})$ to each paranthesis (so that they cancel out each other),  we obtain the second line of~\eqref{eq:explanation_for_e3}. The last line in \eqref{eq:explanation_for_e3} follows from the fact that the TEE is a constant. The second identity in~\eqref{eq:fact_third} is a direct consequence of ~\eqref{eq:cmi_reference}. This completes the proof.

\section{Canonical nature of max-entropy state} \label{appendix:2D}
This Appendix establishes further properties of the max-entropy state, but we emphasize our results here are not necessary for the argument in the main text nor in the other appendices.

The motivating question is as follows. Consider two topologically equivalent partitions of a chosen annulus $X$ as $X=ABC$ and $X=A'B'C'$ as in Fig.~\ref{fig:tee-spiral}. Is it true that the max-entropy state ($\lambda_X$) consistent with the reference state $\sigma$ on $AB$ and $BC$ must be the same as the max-entropy state ($\lambda'_X$) consistent with the reference state $\sigma$ on $A'B'$ and $B'C'$? We argue that the answer is yes for an appropriate notion of topological equivalence that we explain below. In particular, $A'B'C'$ need \emph{not} be obtainable from $ABC$ by using a quasi-1D approach (that is, by refining the partition only in the ``angular direction" and regrouping sites); see the partitions in Fig.~\ref{fig:tee-spiral} for an example of partitions that are not related in the quasi-1D way.

\begin{figure}[t]
    \centering
    	\begin{tikzpicture}
		\begin{scope}[scale=0.6]
			\draw [] (0,0) circle [radius=1];
			\draw [] (0,0) circle [radius=2];
			\draw (45:1) -- (45:2);
			\draw (135:1) -- (135:2);
			
			\draw (-45:1) -- (-45:2);
			\draw (-135:1) -- (-135:2);
			
			\node[] (P) at (0: 1.5)  {\small{$C$}};
			\node[] (P) at (-90: 1.5)  {\small{$B$}};
			\node[] (P) at (90: 1.5)  {\small{$B$}};
			\node[] (P) at (180: 1.5)  {\small{$A$}};
			
			\node[] (P) at (-90: 2.55)  {\small{(a)}};
			
		\end{scope}
		\begin{scope}[xshift=3.8 cm, scale=0.6]
			\draw [] (0,0) circle [radius=1];
			\draw [] (0,0) circle [radius=2];
			\draw [samples=100,smooth,domain=6.28:0,rotate around={45:(0,0)} ] 
			plot ({\x r}:{1+\x/6.28});   
			 
			\draw [samples=100,smooth,domain=6.28:0,rotate around={45+90:(0,0)} ] 
			plot ({\x r}:{1+\x/6.28});   
			
			\draw [samples=100,smooth,domain=6.28:0,rotate around={-45:(0,0)} ] 
			plot ({\x r}:{1+\x/6.28});   
 
			\draw [samples=100,smooth,domain=6.28:0,rotate around={-45-90:(0,0)} ] 
			plot ({\x r}:{1+\x/6.28});
			 
			\node[] (P) at (0: 2.5)  {\small{$C'$}};			 
			\node[] (P) at (90: 0)  {\small{$B'$}};
			\node[] (P) at (180: 2.5)  {\small{$A'$}};
			
		    \node[] (P) at (-90: 2.55)  {\small{(b)}};
			
			\draw [color=blue!60!white] (90:0.4) -- (90:1.07);
			\draw [color=blue!60!white] (-90:0.4) -- (-90:1.07);
			\draw [color=blue!60!white] (0:1.95) -- (0:2.3);
			\draw [color=blue!60!white] (180:1.95) -- (180:2.3);			
		\end{scope}	
	\end{tikzpicture}
    \caption{Topologically equivalent but geometrically distinct partitions of annulus $X$. (a) $X$ is $ABC$. (b) $X$ is $A'B'C'$.}
    \label{fig:tee-spiral}
\end{figure}
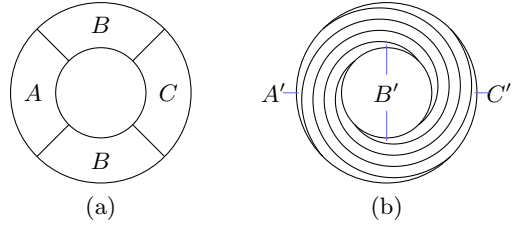

The key technical result is the following lemma about partitions of an annulus $X$ into five ordered regions and two moves in Fig.~\ref{fig:deformable-Xs}. Note that the combination of these two moves can ``deform" every region $X_i$ ($i=1,\cdots, 5$) in the partition. 

\begin{lemma}
\label{prop:canonical-lambda}
Let $X$ be an annulus and $\sigma$ be a reference state (Definition~\ref{assumption:TEE-inv}~of the main text). 
Let $\{X_1, X_2, X_3, X_4, X_5 \}$ be an ordered set of subsystems topologically equivalent to that shown in Fig.~\ref{fig:deformable-Xs}.
If $\{X_1, X_2, X_3, X_4, X_5 \}$ can be converted to a new partition $\{Y_1, Y_2, Y_3, Y_4, Y_5 \}$ of $X$ by a finite sequence of the two moves shown in Fig.~\ref{fig:deformable-Xs}, then $\tau_X(\{X_i\}) = \tau_X(\{Y_i\})$ where $\tau_X(\{X_i\})$ and $\tau_X(\{Y_i\})$ are the canonical Markov chains (Definition~\ref{definition:canonical_markov_chain}) with respect to the partitions $\{ X_1, \cdots, X_5\}$ and $\{ Y_1, \cdots, Y_5\}$ respectively.  
\end{lemma}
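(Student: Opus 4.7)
The plan is to reduce to a single move by induction and then invoke the uniqueness characterization of the canonical Markov chain. By induction on the length of the sequence of moves, it suffices to show that a single application of either move converts $\{X_i\}$ into $\{X'_i\}$ without changing the canonical Markov chain, i.e.\ $\tau_X(\{X_i\}) = \tau_X(\{X'_i\})$.

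By Proposition~\ref{proposition:max_entropy}, $\tau_X(\{X'_i\})$ is the unique maximum-entropy state indistinguishable from $\sigma$ on $X'_1X'_2X'_3X'_4$ and $X'_2X'_3X'_4X'_5$. Equivalently (combining Propositions~\ref{proposition:indistinguishability}, \ref{proposition:tau_markov}, and \ref{proposition:max_entropy}), it is the unique quantum Markov chain with respect to the $X'$-partition that agrees with $\sigma$ on every connected proper subsystem of that partition. So my strategy is to verify that $\tau_X(\{X_i\})$ itself satisfies these two properties: (a) indistinguishability from $\sigma$ on $X'_1X'_2X'_3X'_4$ and $X'_2X'_3X'_4X'_5$; (b) the Markov condition $I(X'_1:X'_5|X'_2X'_3X'_4)_{\tau_X(\{X_i\})}=0$. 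Once both are established, uniqueness forces $\tau_X(\{X_i\}) = \tau_X(\{X'_i\})$.

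For (a), I would exploit Proposition~\ref{proposition:indistinguishability}, which tells us that $\tau_X(\{X_i\})$ already matches $\sigma$ on every connected proper subsystem of the $X$-partition. Each move is local enough that the connected proper $X'$-subsystems can be related to connected proper $X$-subsystems modulo a deformation whose effect can be absorbed using the second axiom of the reference state (Definition~\ref{assumption:TEE-inv}(ii)): vanishing mutual information between non-adjacent subsystems allows one to factor the reduced density matrix across a small gap, reducing the deformed statement to one about overlapping connected proper $X$-subsystems on which $\tau_X(\{X_i\})$ and $\sigma$ already coincide. For (b), since $\tau_X(\{X_i\})$ is a global Markov chain with respect to the $X$-partition (Proposition~\ref{proposition:tau_markov}) and, by (a), matches $\sigma$ on connected proper subsystems of the $X'$-partition, it is automatically a locally Markov chain with respect to the $X'$-partition (using that $\sigma$ is locally Markov in the $X'$-ordering by Proposition~\ref{cor:reference_is_Markov}). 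Then Proposition~\ref{proposition:lqmc_characterization} implies that $I(L:R|M)_{\tau_X(\{X_i\})}$ is a constant over all chain-like tripartitions $LMR$ of $X$ in the $X'$-ordering; choosing a tripartition that can also be viewed as chain-like in the $X$-ordering (or aligned with it after a permissible regrouping) pins the constant to zero by the global Markov property in the $X$-ordering.

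The main obstacle I expect is step (a) for the more topologically nontrivial of the two moves in Fig.~\ref{fig:deformable-Xs}, where the deformation of a region may not align with any $X$-boundary. The plan to overcome it is to decompose the move into a finite sequence of small local shifts, each of which modifies a single $X'$-region only within a neighborhood separated from the rest of the annulus by a ``buffer''; axiom (ii) then factorizes $\sigma$ and $\tau_X(\{X_i\})$ across this buffer, reducing each shift to a statement about connected proper $X$-subsystems already handled by Proposition~\ref{proposition:indistinguishability}. If this factorization-plus-Petz-recovery decomposition can be made precise for both moves, the uniqueness argument above completes the proof.
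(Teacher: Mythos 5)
Your high-level strategy (induction over moves, then pin down the canonical Markov chain by its uniqueness characterization) is the right one, and it is essentially the paper's strategy for the \emph{second} move, where your fixed tripartition $X'_1\,|\,X'_2X'_3X'_4\,|\,X'_5$ aligns with the $X$-partition: there $X'_1X'_2X'_3X'_4=X_1X_2X_3X_4$ and $X'_2X'_3X'_4X'_5=X_2X_3X_4X_5$, so Propositions~\ref{proposition:indistinguishability}--\ref{proposition:max_entropy} apply verbatim. The genuine gap is your step (a) for the \emph{first} move. There $X'_1X'_2X'_3X'_4=X_1X_2X_3X'_4$ with $X'_4$ a deformed region that in general straddles the old $X_4/X_5$ boundary, so this set is \emph{not} contained in any connected proper subsystem of the $X$-partition, and Proposition~\ref{proposition:indistinguishability} says nothing about it. Your proposed repair --- factorizing across a ``buffer'' using Definition~\ref{assumption:TEE-inv}(ii) --- cannot work as described: axiom (ii) only applies to \emph{non-adjacent} subsystems, but the deformed strip $X'_4\,\triangle\,X_4$ is adjacent to everything around it; there is no gap anywhere across which to factor, and the correlations across the $X_3/X'_4$ and $X'_4/X'_5$ boundaries are exactly what must be controlled.

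The missing idea, which is how the paper's proof works, is to let the chain-like tripartition \emph{adapt to the move} rather than fixing it once: for the first move take $A=X_1X_2=X'_1X'_2$, $B=X_3=X'_3$, $C=X_4X_5=X'_4X'_5$. Then $AB$ and $BC$ are \emph{literally unchanged} by the move, and Proposition~\ref{proposition:max_entropy} characterizes both $\tau_X(\{X_i\})$ and $\tau_X(\{X'_i\})$ as the (unique) maximum-entropy state consistent with $\sigma$ on the same two regions $X_1X_2X_3$ and $X_3X_4X_5$; equality is immediate, with no need to ever discuss non-aligned regions. (For the second move one takes $B=X_2X_3X_4=X'_2X'_3X'_4$, which is your choice.) Alternatively, your step (a) \emph{can} be rescued without changing tripartition, but by a different mechanism than the one you propose: use the global Markov property of $\tau_X(\{X_i\})$ (Proposition~\ref{proposition:tau_markov}) together with monotonicity, Eq.~\eqref{eq:ssa_mono}, to get $I(X_1X_2:X'_4|X_3)_{\tau}\le I(X_1X_2:X_4X_5|X_3)_{\tau}=0$, note $I(X_1X_2:X'_4|X_3)_{\sigma}=0$ since $\sigma$ is locally Markov in the $X'$-ordering, check agreement of $\tau$ and $\sigma$ on $X_1X_2X_3$ and on $X_3X'_4\subset X_2X_3X_4X_5$, and then invoke Lemma~\ref{lemma:markov_chain_local_to_global}. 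Either repair closes the gap; the factorization-across-buffers argument does not.
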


\begin{figure}[tb]
    \centering    
        \begin{tikzpicture}
        \begin{scope}[scale=0.9]
    \begin{scope}[scale=0.7]
    \draw[]  (0,0) rectangle (5,3);
    \draw[]  (1,1) rectangle (4,2);
    \draw[] (1,0) -- ++ (0,1);
    \draw[] (2,0) -- ++ (0,1);
    \draw[] (3,0) -- ++ (0,1);
    \draw[] (4,0) -- ++ (0,1);
    \draw[] (1,2) -- ++ (0,1);
    \draw[] (2,2) -- ++ (0,1);
    \draw[] (3,2) -- ++ (0,1);
    \draw[] (4,2) -- ++ (0,1);
        \node[] () at (1.5, 0.5) {$X_2$};
        \node[] () at (2.5, 0.5) {$X_3$};
        \node[] () at (3.5, 0.5) {$X_4$};
        \node[] () at (1.5, 2.5) {$X_2$};
        \node[] () at (2.5, 2.5) {$X_3$};
        \node[] () at (3.5, 2.5) {$X_4$};
        \node[] () at (4.5, 1.5) {$X_5$};
        \node[] () at (0.5, 1.5) {$X_1$};

         \draw [->, blue!60!black!80!white, line width=0.45 mm](1.5,0-0.25) --++ (-1.0,-1.0);
    \draw[->, green!40!black!80!white, line width=0.45 mm] (3.5,0-0.25) --++ (1.0,-1.0);
    \end{scope}

     \begin{scope}[scale=0.7,yshift=-4.5 cm, xshift=-3 cm]
    \draw[]  (0,0) rectangle (5,3);
    \draw[]  (1,1) rectangle (4,2);
    \draw[] (1-0.25,0) -- ++ (0.5,1);
    \draw[] (2,0) -- ++ (0,1);
    \draw[] (3,0) -- ++ (0,1);
    \draw[] (4-0.25,0) -- ++ (0.5,1)--++(-0.25,0.5);
    
    \draw[] (1, 1.5) --++(-0.25,0.5) -- ++ (0.5,1);
    \draw[] (2,2) -- ++ (0,1);
    \draw[] (3,2) -- ++ (0,1);
    \draw[] (4-0.25,2) -- ++ (0.5,1);
        \node[] () at (1.5, 0.5) {\color{blue!60!black!80!white}$X'_2$};
        \node[] () at (2.5, 0.5) {$X_3$};
        \node[] () at (3.5, 0.5) {\color{blue!60!black!80!white}$X'_4$};
        \node[] () at (1.5, 2.5) {\color{blue!60!black!80!white}$X'_2$};
        \node[] () at (2.5, 2.5) {$X_3$};
        \node[] () at (3.5, 2.5) {\color{blue!60!black!80!white}$X'_4$};
        \node[] () at (4.5, 1.5) {\color{blue!60!black!80!white}$X'_5$};
        \node[] () at (0.5, 1.5) {\color{blue!60!black!80!white}$X'_1$};
    \end{scope}
    
     \begin{scope}[scale=0.7,yshift=-4.5 cm, xshift=3 cm]
    \draw[]  (0,0) rectangle (5,3);
    \draw[]  (1,1) rectangle (4,2);
    \draw[] (1,0) -- ++ (0,1);
    \draw[] (2-0.25,0) -- ++ (0.5,1);
    \draw[] (3-0.25,0) -- ++ (0.5,1);
    \draw[] (4,0) -- ++ (0,1);
    \draw[] (1,2) -- ++ (0,1);
    \draw[] (2-0.25,2) -- ++ (0.5,1);
    \draw[] (3-0.25,2) -- ++ (0.5,1);
    \draw[] (4,2) -- ++ (0,1);
        \node[] () at (1.5, 0.5) {\color{green!40!black}$X'_2$};
        \node[] () at (2.5, 0.5) {\color{green!40!black}$X'_3$};
        \node[] () at (3.5, 0.5) {\color{green!40!black}$X'_4$};
        \node[] () at (1.5, 2.5) {\color{green!40!black}$X'_2$};
        \node[] () at (2.5, 2.5) {\color{green!40!black}$X'_3$};
        \node[] () at (3.5, 2.5) {\color{green!40!black}$X'_4$};
        \node[] () at (4.5, 1.5) {$X_5$};
        \node[] () at (0.5, 1.5) {$X_1$};
    \end{scope}
    \end{scope}
    \end{tikzpicture}
    \caption{A partition of annulus $X$ into ordered regions $\{ X_1, X_2, X_3, X_4, X_5\}$ and two moves preserving the topology. The first move gives $\{ X'_1, X'_2, X_3, X'_4, X'_5\}$, where $X'_1 X_2'=X_1 X_2$ and $X'_4 X_5'=X_4 X_5$. The second move gives $\{ X_1, X'_2, X'_3, X'_4, X_5\}$ such that $X'_2 X'_3 X'_4 = X_2 X_3 X_4$.}
    \label{fig:deformable-Xs}
\end{figure}
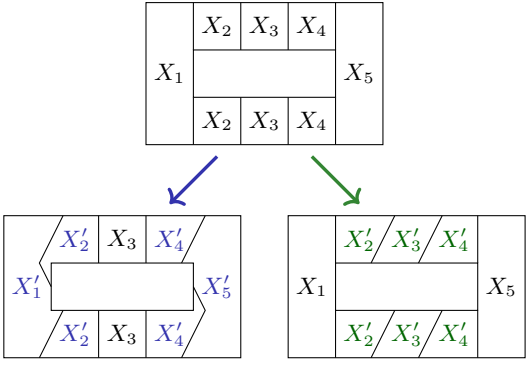

\begin{proof}
From the definition of $\sigma$ (Definition~\ref{assumption:TEE-inv}~in the main text), it follows that $\sigma$ has a constant $I(A:C|B)_{\sigma}$ for any chain-like $A, B,$ and $C$ that partition $X = \cup_{i=1}^5 X_i$. By Proposition~\ref{proposition:lqmc_characterization}, $\sigma_X$ is a locally Markov chain. Therefore, a canonical Markov chain associated with $\sigma_X$ and $\{X_1, X_2, X_3, X_4, X_5 \}$ exists, which we refer to as $\tau_X(\{X_i\})$. 

Next, we show that the two moves in Fig.~\ref{fig:deformable-Xs} do not change the canonical quantum Markov chain. The first move ``deforms" $\{X_1, X_2, X_3, X_4, X_5 \}$ to $\{X'_1, X'_2, X_3, X'_4, X'_5 \}$ such that $X_1 X_2= X_1' X_2'$ and $X_4 X_5= X_4' X_5'$ and such that the topology of the two sets is identical. (Note that $X_3$ is preserved.) We claim that $\tau_X(\{X_i\})=\tau_X(\{X_i'\})$. This is because $\tau_X(\{X_i\})$ is the max-entropy state consistent with $\sigma$ on $X_1X_2 X_3$ and $X_3X_4X_5$ (by Proposition~\ref{proposition:max_entropy}), and thus also the max-entropy state consistent with $\sigma$ on $X_1'X_2' X_3=X_1X_2 X_3$ and $X_3X_4' X_5'=X_3X_4 X_5$. 

The same idea works for the second move. The second move ``deforms"  $\{X_1, X_2, X_3, X_4, X_5 \}$ to $\{X_1, X'_2, X'_3, X'_4, X_5 \}$ such that $X_2 X_3 X_4 = X_2' X_3' X_4'$ and such that the topology of the two sets is identical. (Note that $X_1$ and $X_5$ are preserved.) The state $\tau_X$ is the max-entropy state consistent with $X_1X_2 X_3 X_4$ and $X_2 X_3 X_4 X_5$ (by Proposition~\ref{proposition:max_entropy}), and therefore it is also the max-entropy state consistent with $X_1X'_2 X'_3 X'_4=X_1X_2 X_3 X_4$ and $X'_2 X'_3 X'_4 X_5=X_2 X_3 X_4 X_5$. 

Thus, as long as an ordered partition of $X$ as $\{X_1,X_2,X_3,X_4,X_5\}$ can be converted to the ordered partition $\{Y_1,Y_2,Y_3,Y_4,Y_5\}$ by a finite sequence of the two moves depicted in Fig.~\ref{fig:deformable-Xs}, the canonical Markov chain associated with the former partition, $\tau_X(\{X_i\})$, must be identical with $\tau_X(\{Y_i\})$, the canonical Markov chain associated with the latter partition. This completes the proof.
\end{proof}

We can now use Lemma~\ref{prop:canonical-lambda} to argue that the max-entropy state ($\lambda_X$) consistent with the reference state on $AB$ and $BC$ in Fig.~\ref{fig:tee-spiral}(a) is identical to the max-entropy state ($\lambda'_X$) consistent on the ``spiral-shaped'' subsystems $A'B'$ and $B'C'$, shown in Fig.~\ref{fig:tee-spiral}(b). First, we partition $B$ further as $B=X_2 X_3 X_4$ and similarly, $B'=Y_2 Y_3 Y_4$, and let $A=X_1, C=X_5$, $A'=Y_1, C'=Y_5$. Suppose that the size (and thickness) of each subsystem is large compared to the lattice spacing, so that it is possible to deform $\{X_1,X_2,X_3,X_4,X_5\}$  to $\{ Y_1,Y_2,Y_3,Y_4,Y_5\}$ by a finite sequence of moves discussed in Lemma~\ref{prop:canonical-lambda}, and Fig.~\ref{fig:deformable-Xs}. Then $\tau_X(\{X_i\})=\tau_X(\{Y_i\})$ according to Lemma~\ref{prop:canonical-lambda}. By Proposition~\ref{proposition:max_entropy}, $\lambda_X=\tau_X(\{X_i\})$ and $\lambda'_X=\tau_X(\{Y_i\})$. Therefore, $\lambda_X = \lambda'_X$. This completes the argument.\footnote{In order for this argument to work, it is important to assume that the chosen subsystems are sufficiently large. Otherwise, it may not be possible to deform $\{ X_1,X_2,X_3,X_4,X_5\}$ to $\{Y_1,Y_2,Y_3,Y_4,Y_5 \}$ using the moves discussed in Lemma~\ref{prop:canonical-lambda}.}

This argument leads to the conclusion that the max-entropy state for a given annulus $X$ is \emph{canonical}. Insofar as two different partitions of an annulus are ``topologically equivalent'' (in the sense that the partitions can be deformed using a sequence of moves described in Lemma~\ref{prop:canonical-lambda}), the max-entropy states associated with those partitions are identical. This generalizes the observation that one can associate a canonical Markov chain to a one-dimensional chain (Appendix~\ref{sec:local_vs_global_qmc}) to two dimensions. 

\section{Heuristic argument for generalization beyond doubled phases} \label{sec:stacking}
Here we present a heuristic argument showing that the lower bound 
\begin{align} \label{eq:supp_main_bound}
    \gamma \geq \log \mathcal{D}
\end{align}
from the main text holds for any 2D gapped bosonic ground state. The key point is that reference states (or more specifically, string-net states) are believed to realize all ``doubled'' 2D topological phases, i.e., those obtained by stacking a bosonic topological phase onto its time-reversed partner. So one expects that for any 2D gapped bosonic ground state $\rho$, we should be able to write the doubled state $\rho \otimes \rho^*$ as
\begin{align}
\rho \otimes \rho^* = U \sigma U^\dagger
\end{align}
for some valid reference state $\sigma$ and some appropriate unitary $U$. Here $\rho^*$ denotes the complex conjugate of $\rho$ in some local product basis, while $\rho \otimes \rho^*$ denotes a bilayer state constructed from stacking $\rho^*$ onto $\rho$. One caveat here is that the unitary $U$ need not be a constant-depth circuit in general; instead we have to allow for the possibility that $U$ is a \emph{quasi-local} circuit, i.e.\ a circuit with decaying tails, since circuits of this kind arise naturally from the adiabatic flow of Hamiltonians \cite{Hastings2005, nachtergaele2019quasi}. In what follows, we ignore this issue and think of $U$ as a standard constant-depth circuit. In this case, it follows that the lower bound \eqref{eq:supp_main_bound} holds for the doubled state $\rho \otimes \rho^*$.  It then follows that the bound \eqref{eq:supp_main_bound} must also hold for $\rho$ itself, since the TEE of $\rho \otimes \rho^*$ is exactly twice that of $\rho$, as is the total quantum dimension $\log \mathcal{D}$. Since $\rho$ is an arbitrary gapped bosonic ground state, we have established the claim.
Of course, this argument is only heuristic; to turn this into a rigorous proof, we would need to (i) prove the existence of the quasi-local circuit $U$ and (ii) generalize the results in this paper from constant-depth circuits to quasi-local circuits. We leave these questions to future work.

\end{document}